\theoremstyle{plain}
\newtheorem{thm}{\protect\theoremname}[section]
  \theoremstyle{definition}
  \newtheorem{defn}[thm]{\protect\definitionname}
  \theoremstyle{plain}
  \newtheorem{lem}[thm]{\protect\lemmaname}
\newif\ifFULL
\providecommand{\E}{\mathrm{E}}
\definecolor{gray-comment}{gray}{0.5}
\theoremstyle{plain}
\newtheorem{hypothesis}{Hypothesis}
\newtheorem*{rep@theorem}{\rep@title}
\newcommand{\newreptheorem}[2]{%
\newenvironment{rep#1}[1]{%
 \def\rep@title{#2 \ref{##1}}%
 \begin{rep@theorem}}%
 {\end{rep@theorem}}}
  \providecommand{\definitionname}{Definition}
  \providecommand{\lemmaname}{Lemma}
\providecommand{\theoremname}{Theorem}
\begin{document}

\title{Honest signaling in zero-sum games is hard;\\ and lying is even
harder!}

\author{Aviad Rubinstein\thanks{UC Berkeley.
I thank Shaddin Dughmi for explaining \cite{CCDEHT15-quasipoly_signaling}. I thank Jonah Brown-Cohen, Rishi Gupta, Christos Papadimitriou, Tselil Schramm, and anonymous reviewers for helpful comments on earlier drafts.
This research was supported by Microsoft Research PhD Fellowship, as well as NSF grant CCF1408635 and by Templeton Foundation grant 3966. This work was done in part at the Simons Institute for the Theory of Computing.}}
\maketitle
\begin{abstract}
We prove that, assuming the exponential time hypothesis, finding an
$\epsilon$-approximately optimal signaling scheme in a two-player
zero-sum game requires quasi-polynomial time ($n^{\tilde{\Omega}\left(\lg n\right)}$).
This is tight by \cite{CCDEHT15-quasipoly_signaling} and resolves
an open question of Dughmi \cite{Dughmi14-planted_clique}. We also
prove that finding a multiplicative approximation is \NP-hard.

We also introduce a new model where a dishonest signaler may publicly
commit to use one scheme, but post signals according to a different
scheme. For this model, we prove that even finding a $\left(1-2^{-n}\right)$-approximately
optimal scheme is \NP-hard.
\end{abstract}
 \ifFULL 
\else 
\setcounter{page}{0} 
\thispagestyle{empty}

\newpage 
\fi

\section{Introduction}

Many classical questions in economics involve extracting information
from strategic agents. Lately, there has been growing interest within
algorithmic game theory in {\em signaling}: the study of how to
{\em reveal information} to strategic agents (see e.g. \cite{MS12-more_signaling,DIR13-more_signaling,EFGLT14-more_signaling,Dughmi14-planted_clique,CCDEHT15-quasipoly_signaling}
and references therein). Signaling has been studied in many interesting
economic and game theoretic settings. Among them, {\sc Zero-Sum Signaling}
proposed by Dughmi \cite{Dughmi14-planted_clique} stands out as a
canonical problem that cleanly captures the computational nature of
signaling. In particular, focusing on zero-sum games clears away issues
of equilibrium selection and computational tractability of finding
an equilibrium. 
\begin{defn}
[{\sc Zero-Sum Signaling} \cite{Dughmi14-planted_clique}] \label{def:signaling}Alice
and Bob play a Bayesian zero-sum game where the payoff matrix $M$
is drawn from a publicly known prior. The signaler Sam privately observes
the state of nature (i.e. the payoff matrix), and then publicly broadcasts
a signal $\varphi\left(M\right)$ to both Alice and Bob. Alice and
Bob Bayesian-update their priors according to $\varphi\left(M\right)$'s
and play the Nash equilibrium of the expected game; but they receive
payoffs according to the true $M$. Sam's goal is to design an efficient
signaling scheme $\varphi$ (a function from payoff matrices to strings)
that maximizes Alice's expected payoff.
\end{defn}
Dughmi's \cite{Dughmi14-planted_clique} main result proves that assuming
the hardness of the {\sc Planted Clique} problem, there is no additive
FPTAS for {\sc Zero-Sum Signaling}. The main open question left
by \cite{Dughmi14-planted_clique} is whether there exists an additive
PTAS. Here we answer this question in the negative: we prove that
assuming the Exponential Time Hypothesis (ETH) \cite{IP01-ETH}, obtaining
an additive-$\epsilon$-approximation (for some constant $\epsilon>0$)
requires quasi-polynomial time ($n^{\tilde{\Omega}\left(\lg n\right)}$).
This result is tight thanks to a recent quasi-polynomial ($n^{\frac{\lg n}{\poly\left(\epsilon\right)}}$)
time algorithm by Cheng et al. \cite{CCDEHT15-quasipoly_signaling}.
Another important advantage of our result is that it replaces the
hardness of {\sc Planted Clique} with a more believable worst-case
hardness assumption (see e.g.~the discussion in \cite{BKW15-best_nash}). 
\begin{thm}
[Main Result] There exists a constant $\epsilon>0$, such that assuming
ETH, approximating {\sc Zero-Sum Signaling} with payoffs in $\left[-1,1\right]$
to within an additive $\epsilon$ requires time $n^{\tilde{\Omega}\left(\lg n\right)}$.
\end{thm}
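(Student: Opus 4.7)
The plan is a two-stage reduction. First, combine ETH with the birthday-repetition paradigm to obtain a ``free game'' that is quasi-polynomially hard to approximate; then encode that free game as an instance of Zero-Sum Signaling.

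\emph{Stage 1: an ETH-hard free game.} Under ETH, 3-SAT on $n$ variables requires $2^{\Omega(n)}$ time. Composing with the PCP theorem and sub-exponential parallel repetition yields a constant-gap, constant-alphabet 2-CSP $L$ on $m=\mathrm{poly}(n)$ variables that is $2^{\Omega(m)}$-hard. Apply an Aaronson--Impagliazzo--Moshkovitz style ``free game'' transformation: each of two provers is given a uniformly random subset of $k \approx \sqrt{m\log m}$ variables and must output an assignment, with the verifier checking the constraints of $L$ induced on the overlap (non-empty with constant probability by the birthday paradox). The resulting 2-player game has $N = 2^{\tilde{O}(\sqrt{m})}$ strategies per player and a constant YES/NO value-gap, and is $N^{\tilde{\Omega}(\log N)}$-hard under ETH because $\log N \approx \sqrt{m}$.

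\emph{Stage 2: encoding the free game as signaling.} I will take the state of nature to be a uniformly random ``edge'' $e = (Q_1,Q_2)$ of the free game. In the payoff matrix $M_e$, Alice's pure strategies are assignments to $Q_1$-type queries and Bob's are assignments to $Q_2$-type queries; entries lie in $[-1,1]$ and reward Alice when the joint assignment satisfies the constraints of $L$ induced on $Q_1 \cap Q_2$. For \emph{completeness}, if $L$ is satisfied by $\sigma$, then Sam's scheme that reveals the drawn edge produces a near-pure Nash at $(\sigma|_{Q_1},\sigma|_{Q_2})$ in every signal class and signaling value at least $1 - o(1)$. For \emph{soundness}, I will argue by contrapositive that from any signaling scheme of value $\geq 1-\epsilon$ one can extract, by sampling each variable's label from the marginal induced by Alice's equilibrium mixed strategies across signals, a global labeling of $L$ satisfying a $(1-O(\epsilon))$-fraction of its constraints, contradicting the NO case.

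The main obstacle is the soundness direction. Unlike reductions to best-Nash in general-sum games (e.g.\ \cite{BKW15-best_nash}), where both players cooperate on a witness, here Bob is a genuine adversary within each signal class and will exploit any ambiguity in Alice's strategy. I therefore expect the construction of $M_e$ to require auxiliary ``consistency-test'' rows and columns, so that Bob's minimax response inflicts an $\Omega(1)$ loss on Alice whenever her mixed strategy is spread over partial assignments that cannot be extended to a globally consistent labeling of $L$. The delicate quantitative task is to balance the magnitudes of these gadgets so that (a) the completeness value stays near $+1$ under a true satisfying assignment and (b) the soundness $\epsilon$-gap survives the rounding argument that extracts a global labeling from a good signaling scheme.
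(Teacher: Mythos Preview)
Your Stage~1 is fine and matches the paper's use of birthday repetition. The gap is entirely in Stage~2, and it is exactly the obstacle the paper singles out in its ``The challenge'' discussion.

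In your construction the state of nature is an edge $(Q_1,Q_2)$ and the completeness scheme simply reveals it. But once the edge is revealed, the posterior is a point mass and Alice and Bob play a fully specified zero-sum game; Bob will \emph{not} play $\sigma|_{Q_2}$, because he is the minimizer. With the naive payoff (``reward Alice when the joint assignment satisfies the induced constraints''), Bob drives Alice's value to near zero even when $L$ is satisfiable, so completeness collapses. You acknowledge this and appeal to unspecified ``consistency-test rows and columns,'' but the hard part is precisely to design a zero-sum gadget that makes \emph{Bob} want to play a partial assignment consistent with some global labeling. Your soundness sketch has the dual problem: revealing the edge means the signal carries no information about any global assignment, so there is nothing to round; and if instead you let signals pool many edges, nothing in your payoff forces Alice's partial assignments across different signals to be mutually consistent.

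The paper's resolution is structurally different from what you propose. Nature does not hold an edge; it holds a \emph{random partial assignment} $\vec{u}$ to a random block $S_i$, together with auxiliary bits $\hat{b}$. The payoff has three tiers $U_\psi^A = \delta\,\tau^{A,Z} - \delta^2\,\tau^{B,Z} + \delta^3\,\tau^{A,B}$: the dominant $\tau^{A,Z}$ and $\tau^{B,Z}$ terms make \emph{both} players want to be consistent with nature's hidden assignment (this is what tames the adversarial Bob), while the small $\tau^{A,B}$ term carries the actual CSP test. The signaling scheme's job is to publish a shift $\vec{v}$ such that $\vec{v}\oplus_\Sigma \vec{u}$ extends to a global satisfying assignment, together with that assignment; crucially, the scheme hides \emph{which} block $S_i$ nature chose, so both players must spread over blocks (enforced by an Alth\"ofer gadget), and a pairwise-independent bit $b(\vec{v})$ locks both players into the announced subgame. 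None of these mechanisms --- nature's random assignment that the signaler must ``rotate'' into a satisfying one, the $\delta/\delta^2/\delta^3$ hierarchy that aligns Bob with nature, the Alth\"ofer uniformity gadget, or the pairwise-independent-bit subgame selector --- appears in your plan, and without them neither your completeness nor your soundness direction goes through.
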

Using a similar construction, we also obtain \NP-hardness for computing
a multiplicative-$\left(1-\epsilon\right)$-approximation. Unfortunately,
in our example Alice can receives both negative and positive payoffs,
which is somewhat non-standard (but not unprecedented \cite{Das13_multiplicative_hardness})
in multiplicative approximation. One main reason that multiplicative
approximation with negative payoffs is problematic is that this is
often trivially intractable for any finite factor: Start with a tiny
additive gap, where Alice's expected payoff is $c$ in the ``yes''
case, and $s=c-\epsilon$ in the ``no'' case; subtract $\left(c+s\right)/2$
from all of Alice's payoffs to obtain an infinite multiplicative hardness.
We note, however, that the combination of negative and positive payoffs
in our construction serves only to obtain structural constraints on
the resulting equilibria; the hardness of approximation is not a result
of cancellation of negative with positive payoffs: Alice's payoff
can be decomposed as a difference of non-negative payoffs $U=U^{+}-U^{-}$,
such that it is hard to approximate Alice's optimal payoff to within
$\epsilon\cdot\E\left[U^{+}+U^{-}\right]$. Nevertheless, we believe
that extending this result to non-negative payoffs could be very interesting.
\begin{thm}
There exists a constant $\epsilon>0$, such that it is \NP-hard to
approximate {\sc Zero-Sum Signaling} to within a multiplicative
$\left(1-\epsilon\right)$ factor.
\end{thm}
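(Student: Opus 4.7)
The plan is to reuse the construction behind the main theorem, but to recalibrate it so that (i) the starting hard problem is \NP-hard rather than merely \ETH-hard, and (ii) the additive gap one obtains on Alice's payoff is proportional to the absolute scale $\E[U^{+}+U^{-}]$, and not merely to the payoff range. This way, the same reduction that witnesses additive-$\epsilon$ hardness will simultaneously witness multiplicative-$(1-\epsilon')$ hardness.

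First, I would replace the ETH-hard starting point of the main reduction with a standard \NP-complete problem (e.g.\ \textsc{3-SAT} or an appropriate \textsc{Label Cover} promise problem) of size polynomial in the signaling instance. The structural arguments from the main theorem should carry over essentially verbatim, producing a distribution on payoff matrices $M$ and two values $c > s$ such that Sam's optimum signaling value is at least $c$ in YES-instances and at most $s$ in NO-instances, with an absolute gap $c - s \geq \epsilon$.

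Next, I would inspect the construction to verify that $c$ itself is bounded away from $0$, more precisely that $c$ is a constant fraction of the natural scale $\E[U^{+}+U^{-}]$ of the game, where $U=U^{+}-U^{-}$ is the decomposition of Alice's payoff promised in the theorem statement. If the payoffs are already designed so that Alice, under the optimal signaling scheme in the YES case, consistently realises payoff of a fixed sign on a noticeable fraction of the probability mass, then $c = \Theta(\E[U^{+}+U^{-}])$ and the additive gap $\epsilon$ converts to a multiplicative gap $(s/c) \le 1 - \epsilon/c = 1 - \epsilon'$ for some constant $\epsilon' > 0$. Should the construction not already have this property, I would pad it with an additional ``baseline'' component of Alice's payoff (independent of the reduction's combinatorial part) that is positive, of controlled magnitude, and robust under signaling, so as to force $c$ to be $\Theta(1)$ while preserving the additive hardness.

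The main obstacle, flagged by the authors themselves, is making sure that the multiplicative hardness is genuine and not an artifact of cancellation of large positive and large negative payoffs. In particular, a careless reduction could leave $c$ tiny and only $c-s$ bounded below, which would give the trivial infinite-ratio hardness by a shift. Avoiding this requires tracking $U^{+}$ and $U^{-}$ separately throughout the construction and showing that the hardness of approximation persists even when the approximation error is charged against $\epsilon\cdot\E[U^{+}+U^{-}]$. This bookkeeping — reduction from \NP-hard source plus a careful decomposition of Alice's payoff so that both the positive and negative components remain of controlled size — is, I expect, the most delicate step of the argument.
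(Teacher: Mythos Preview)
Your proposal has a genuine gap: the construction behind the main (additive) theorem is \emph{not} a polynomial-size reduction. Its size is $N=2^{\Theta(\sqrt{n})}$ precisely because of the birthday repetition (each state of nature, and each pure strategy, encodes an assignment to a $\sqrt{n}$-tuple of variables), and on top of that Althofer's gadget contributes another $\binom{n/k}{n/2k}$ pure strategies. So ``replacing the ETH-hard starting point with an \NP-complete problem of size polynomial in the signaling instance'' is not a matter of recalibration; the structural arguments do \emph{not} carry over verbatim once you insist on polynomial size, and you cannot get \NP-hardness from an exponential-time reduction.

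The paper instead builds a genuinely different, polynomial-size construction: nature picks a \emph{single} variable (no birthday repetition), Althofer's $\binom{n/k}{n/2k}$-strategy gadget is replaced by an $n$-strategy hide-and-seek gadget, and, crucially, Bob gets an \emph{additional} hide-and-seek coordinate $q^{B}$ played against nature's hidden index $i$. This last gadget is new and essential: without birthday repetition there is no amplification, so one must separately prevent the signaler from leaking information about $i$ that would let Alice boost $\tau^{A,Z}$; the $q^{B}$ gadget punishes any such leakage. The resulting completeness and soundness values are both $\Theta(1/n)$ (not $\Theta(1)$), but their \emph{ratio} is bounded away from $1$, which is exactly what multiplicative hardness requires. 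Your ``padding with a positive baseline'' idea is therefore both unnecessary and counterproductive: adding a constant $B$ to both $c$ and $s$ sends $s/c$ toward $1$ and destroys the multiplicative gap.
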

Finally, we note that since all our games are zero-sum, the hardness
results for {\sc Zero-Sum Signaling} also apply to the respective
notions of additive- and multiplicative-$\epsilon$-Nash equilibrium.

\subsection{The computational complexity of lying}

As a motivating example, consider the purchase of a used car (not
a zero-sum game, but a favorite setting in the study of signaling
since Akerlof's seminal ``Market for Lemons'' \cite{Akerlof70}),
and let us focus on the information supplied by a third party such
as a mechanic inspection. The mechanic (Sam) publishes a signaling
scheme: report any problem found in a one-hour inspection. Unbeknownst
to the buyer (Bob), the mechanic favors the seller (Alice), and chooses
to use a different signaling scheme: always report that the car is
in excellent condition. Notice that it is crucial that the buyer does
not know that the mechanic is lying (and more generally, we assume
that neither party knows that the signaler is lying).

Much of the work in economics is motivated by selfish agents manipulating
their private information. Here we introduce a natural extension of
Dughmi's signaling model, where the signaler manipulates his private
information. We formalize this extension in the {\sc Zero-Sum Lying}
problem, where the signaling scheme consists of two functions  $\varphi_{\textsc{alleged}}$
(``report any problem found'') and $\varphi_{\textsc{real}}$ (``car
is in excellent condition'') from payoff matrices to signals. Sam
promises Alice and Bob to use $\varphi_{\textsc{alleged}}$, which
is what Alice and Bob use to compute the posterior distribution given
the signal (i.e. the seller and buyer look at the mechanic's report
and negotiate a price as if the state of the car is correctly reflected).
But instead Sam signals according to $\varphi_{\textsc{real}}$. 

We formally define the {\sc Zero-Sum Lying} problem below; notice
that the original {\sc Zero-Sum Signaling} (Definition \ref{def:signaling})
corresponds to the special case where we restrict $\varphi_{\textsc{real}}=\varphi_{\textsc{alleged}}$.
\begin{defn}
[{\sc Zero-Sum Lying}] Alice and Bob play a Bayesian, one-shot, zero-sum
game where the payoff matrix is drawn from a publicly known prior.
A {\em dishonest signaling scheme} consists of two (possibly randomized)
functions $\varphi_{\textsc{alleged}},\varphi_{\textsc{real}}$ from
payoff matrices to signals, that induce the following protocol:
\begin{itemize}
\item Nature draws a private payoff matrix $M\sim{\cal D}_{\textsc{nature}}$.
\item Alice and Bob observe the scheme $\varphi_{\textsc{alleged}}$ and
the signal $\sigma\triangleq\varphi_{\textsc{real}}\left(M\right)$.
(But they don't know the scheme $\varphi_{\textsc{real}}$!)
\item Alice and Bob choose a Nash equilibrium $\left(\mathbf{x};\mathbf{y}\right)$
for the zero-sum game with payoff matrix $\E\left[M'\mid\varphi_{\textsc{alleged}}\left(M'\right)=\sigma\right]$%
\footnote{When $\varphi_{\textsc{alleged}},\varphi_{\textsc{real}}$ are randomized,
we have $\sigma\sim\varphi_{\textsc{real}}\left(M\right)$ and expectation
conditioned on $\E\left[M'\mid\sigma\sim\varphi_{\textsc{alleged}}\left(M'\right)\right]$.%
}.

\begin{itemize}
\item (We assume that the support of $\varphi_{\textsc{real}}$ is contained
in the support of $\varphi_{\textsc{alleged}}$.)
\end{itemize}
\item Alice and Bob receive payoffs $\mathbf{x}^{\top}M\mathbf{y}$ and
$-\mathbf{x}^{\top}M\mathbf{y}$, respectively.
\end{itemize}
Sam's goal is to compute a pair $\left(\varphi_{\textsc{alleged}},\varphi_{\textsc{real}}\right)$
that maximizes Alice's expected payoff.
\end{defn}
\vspace{0.7cm}

In the toy-setting of a biased car inspection, the Sam's optimal strategy
was very simple. In contrast, we show that for a general distribution
over zero-sum games, it is \NP-hard to find a pair $\left(\varphi_{\textsc{alleged}},\varphi_{\textsc{real}}\right)$
that is even remotely close to optimal. Notice that this is very different
from the honest case where, as we mentioned earlier, \NP-hardness
of additive approximation is unlikely given the additive quasi-PTAS
of \cite{CCDEHT15-quasipoly_signaling}. 
\begin{thm}
\label{thm:lying-1}Approximating {\sc Zero-Sum Lying} with Alice's
payoffs in $\left[0,1\right]$ to within an additive $\left(1-2^{-n}\right)$
is \NP-hard.
\end{thm}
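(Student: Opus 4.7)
The plan is to reduce from 3-SAT to the gap version of Zero-Sum Lying in which (YES) $\mathrm{OPT} = 1$ or (NO) $\mathrm{OPT} \le 2^{-n-1}$; since Alice's payoffs lie in $[0,1]$, this gap immediately yields the claimed $(1-2^{-n})$-additive hardness. Given a 3-CNF formula $\phi$ on $n$ variables and $m$ clauses, I would construct a polynomial-size prior $\mathcal{D}_\phi$ over payoff matrices in which Alice's pure strategies encode candidate assignments $\alpha \in \{0,1\}^n$ (via a polynomial-size encoding, e.g.\ literals), Bob's pure strategies are ``clause tests,'' and each matrix in the support tests a specific clause of $\phi$ (or, after amplification, a conjunction of clauses).

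The intended semantics is that each signal $\sigma$ of $\varphi_{\textsc{alleged}}$ corresponds to Sam ``committing'' Alice and Bob, via the induced Nash equilibrium $(x_\sigma, y_\sigma)$ of the alleged posterior $\bar M_\sigma$, to playing a single assignment $\alpha_\sigma$; the real scheme then routes each true matrix $M$ to the signal whose $\alpha_\sigma$ is ``correct'' for $M$. In the YES direction, a satisfying $\alpha^\ast$ yields the trivial single-signal scheme: by construction the row indexed by $\alpha^\ast$ is identically $1$ in every matrix of the support, so Alice's equilibrium strategy is $\alpha^\ast$ and her payoff is $1$ against every realized matrix. In the NO direction, the amplification (e.g.\ a parallel-product structure baked into the support) ensures that no single assignment $\alpha$ satisfies more than a $2^{-n-1}$ fraction of the matrices in $\mathrm{supp}(\mathcal{D}_\phi)$, so any ``committed'' $\alpha_\sigma$ succeeds only on a negligible fraction of truths.

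The main obstacle is proving a \emph{rigidity lemma}: that for \emph{every} alleged posterior $\bar M_\sigma$ in the convex hull of $\mathcal{D}_\phi$, Alice's Nash mixed strategy $x_\sigma$ must be essentially a single pure-assignment strategy, with any spread across multiple assignments aggressively exploited by Bob's equilibrium strategy (driving the value of $\sigma$ exponentially low). This is subtle because Sam has enormous freedom in choosing alleged posteriors: the only global constraint is $\sum_\sigma p_\sigma \bar M_\sigma = \E_{M \sim \mathcal{D}_\phi}[M]$ with each $\bar M_\sigma$ lying in $\mathrm{conv}(\mathrm{supp}\,\mathcal{D}_\phi)$. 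The matrices must therefore be designed so that, uniformly over this convex hull, mixing across assignments admits an exploitable Bob strategy; moreover the amplification must preserve both the rigidity and the exponential gap against \emph{arbitrary} (lying) signaling schemes, which rules out, e.g., Sam splitting assignments across many signals and reconstituting them via the real scheme. Reconciling the global splitting constraint with a uniform rigidity statement, and propagating the gap through amplification, is the technical heart of the argument.
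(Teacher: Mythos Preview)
Your proposal has a genuine gap and also misses the structural leverage that the paper actually uses.

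\textbf{The paper's route is completely different.} It does not build a fresh SAT gadget for lying. Instead it bootstraps from the \emph{honest} multiplicative-hardness construction (Section~\ref{sec:Multiplicative-hardness}), which already gives constants $c_1>c_2$ such that, per signal, Alice's maxmin on ${\cal D}_{\textsc{honest}}$ is $\ge c_1/n$ (completeness) or $\le c_2/n$ (soundness). The lying instance then mixes: with probability $2^{-n}$ play $-A_{\textsc{honest}}^\top$ augmented by an extra Bob-column of value $(c_1+c_2)/2n$; with probability $1-2^{-n}$ play a degenerate game where Alice gets $1$ iff Bob avoids that extra column. In the YES case, $\varphi_{\textsc{alleged}}$ honestly signals the rare event using $\varphi_{\textsc{honest}}$, so on any such signal Bob's best response is one of his first $n$ columns (he expects $\ge c_1/n>(c_1+c_2)/2n$); $\varphi_{\textsc{real}}$ then \emph{always} emits such a signal, tricking Bob off the extra column with probability $1-2^{-n}$. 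In the NO case, for \emph{every} alleged posterior Bob prefers the extra column (the honest component gives him at most $c_2/n$, the degenerate component gives $0$), so no lie helps. The entire $1-2^{-n}$ gap comes from the deception, not from amplification inside the game.

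\textbf{Where your plan breaks.} First, the encoding: you cannot have Alice's pure strategies be ``candidate assignments $\alpha\in\{0,1\}^n$'' and also have a polynomial-size game; if instead her strategies are literals, then a pure strategy is not an assignment and your rigidity statement no longer even parses. Second, and more seriously, the rigidity lemma as you state it is not enough even if it held. Suppose every alleged posterior $\bar M_\sigma$ forces Alice onto a single assignment $\alpha_\sigma$. Sam is still free to pick \emph{many} signals with many different $\alpha_\sigma$'s (the only constraint is that the alleged posteriors average to the prior) and then, via $\varphi_{\textsc{real}}$, route each true matrix $M$ to a $\sigma$ with $\alpha_\sigma$ tailored to $M$. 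Your soundness claim ``no single $\alpha$ satisfies more than a $2^{-n-1}$ fraction of the matrices'' bounds one signal's contribution, not the scheme's total; you explicitly flag this (``splitting \dots and reconstituting'') but do not close it. Finally, note that your YES case uses $\varphi_{\textsc{real}}=\varphi_{\textsc{alleged}}$, so your argument, if it worked, would prove additive $(1-2^{-n})$ hardness for \emph{honest} signaling --- contradicting the quasi-PTAS of \cite{CCDEHT15-quasipoly_signaling}. The lying power has to be what creates the exponential gap, and your outline never uses it.
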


\subsubsection*{Further discussion of dishonest signaling}

It is important to note that the dishonest signaling model has a few
weaknesses:
\begin{itemize}
\item Alice and Bob must believe the dishonest signaler. (See also further
discussion below.)
\item In particular, Sam cheats in favor of Alice, but Alice doesn't know
about it --- so what's in it for Sam? Indeed, we assume that Sam has
some intrinsic interest in Alice winning, e.g. because Sam loves Alice
or owns some of her stocks.
\item The game for which players' strategies are at equilibrium may be very
different from the actual game. Note, however, that this is also the
case for the honest signaling model (when the signaling scheme is
not one-to-one).
\item The players may receive different payoffs for different equilibria;
this may raise issues of equilibrium selection.
\end{itemize}
Despite those disadvantages, we believe that our simple model is valuable
because it already motivates surprising results such as our Theorem
\ref{thm:lying-1}. On a higher level, we hope that it will inspire
research on many other interesting aspects on dishonest signaling.
For example, notice that in our model Sam lies without any reservation;
if, per contra, the game was repeated infinitely many times, one would
expect that Alice and Bob will eventually stop believing the signals,
hence only honest signaling is possible. There is also a spectrum
of intermediate situations, where Alice and Bob observe some partial
information about past games (e.g. marginal distribution of signals)
and may encounter questions about distribution testing. 

Another related direction of potential future research is to think
about Sam's incentives. When is honest signaling optimal for Sam?
When is it approximately optimal? How should one design an effective
``punishing'' mechanism?

\subsection{Concurrent work of Bhaskar et al.}

In independent concurrent work by Bhaskar et al.~\cite{BCKS-planted_clique},
quasi-polynomial time hardness for additive approximation of {\sc Zero-Sum Signaling}
was obtained assuming the hardness of the {\sc Planted Clique} problem
(among other interesting results%
\footnote{For zero-sum games, Bhaskar et al.~also rule out an additive FPTAS
assuming $\P\neq\NP.$ This result follows immediately from our Theorem
\ref{thm:multiplicative}.%
} about network routing games and security games). Although we are
not aware of a formal reduction, hardness of {\sc Planted Clique}
is a qualitatively stronger assumption than ETH in the sense that
it requires average case instances to be hard. Hence in this respect,
our result is stronger.

\subsection{\label{sub:Techniques}Techniques}

Our main ingredient for the quasi-polynomial hardness is the technique
of ``birthday repetition'' coined by \cite{AIM14-birthday} and recently
applied in game theoretic settings in \cite{BKW15-best_nash,BPR15-PCP-PPAD, DFS16}:
We reduce from a $2$-ary constraint satisfaction problem (2-CSP)
over $n$ variables to a distribution over $N$ zero-sum $N\times N$
games, with $N=2^{\Theta\left(\sqrt{n}\right)}$. Alice and Bob's
strategies correspond to assignments to tuples of $\sqrt{n}$ variables.
By the birthday paradox, the two $\sqrt{n}$-tuples chosen by Alice
and Bob share a constraint with constant probability. If a constant
fraction of the constraints are unsatisfiable, Alice's payoff will
suffer with constant probability. Assuming ETH, approximating the
value of the CSP requires time $2^{\tilde{\Omega}\left(n\right)}=N^{\tilde{\Omega}\left(\lg N\right)}$.

\paragraph{The challenge}

The main difficulty is that once the signal is public, the zero-sum
game is tractable. Thus we would like to force the signaling scheme
to output a satisfying assignment. Furthermore, if the scheme would
output partial assignments on different states of nature (aka different
zero-sum games in the support), it is not clear how to check consistency
between different signals. Thus we would like each signal to contain
an entire satisfying assignment. The optimal scheme may be very complicated
and even require randomization, yet by an application of the Caratheodory
Theorem the number of signals is, wlog, bounded by the number of states
of nature \cite{Dughmi14-planted_clique}. If the state of nature
can be described using only $\lg N=\tilde{\Theta}\left(\sqrt{n}\right)$
bits%
\footnote{In other words, $N$, the final size of the reduction, is an upper
bound on the number of states of nature.%
}, how can we force the scheme to output an entire assignment?

To overcome this obstacle, we let the state of nature contain a partial
assignment to a random $\sqrt{n}$-tuple of variables. We then check
the consistency of Alice's assignment with nature's assignment, Bob's
assignment with nature's assignment, and Alice and Bob's assignments
with each other; let $\tau^{A,Z},\tau^{B,Z},\tau^{A,B}$ denote the
outcomes of those consistency checks, respectively. Alice's payoff
is given by:
\[
U=\delta\tau^{A,Z}-\delta^{2}\tau^{B,Z}+\delta^{3}\tau^{A,B}
\]
for some small constant $\delta\in\left(0,1\right)$. Now, both Alice
and Bob want to maximize their chances of being consistent with nature's
partial assignment, and the signaling scheme gains by maximizing $\tau^{A,B}$.

Of course, if nature outputs a random assignment, we have no reason
to expect that it can be completed to a full satisfying assignment.
Instead, the state of nature consists of $N$ assignments, and the
signaling scheme helps Alice and Bob play with the assignment that
can be completed. 

Several other obstacles arise; fortunately some can be handled using
techniques from previous works on hardness of finding Nash equilibrium
\cite{Alt94,Daskalakis-Papadimitriou-PTAS,BPR15-PCP-PPAD}.

\section{Preliminaries}

\subsubsection*{Exponential Time Hypothesis}

\begin{hypothesis} [Exponential Time Hypothesis (ETH) {\cite{IP01-ETH}}]
3SAT takes time $2^{\Omega\left(n\right)}$.

\end{hypothesis}

\subsubsection*{PCP Theorem and CSP}
\begin{defn}
[2CSP]

$2$-CSP ($2$-ary Constraint Satisfaction Problem) is a maximization
problem. The input is a graph $G=\left(V,E\right)$, alphabet $\Sigma$,
and a constraint $C_{e}\subseteq\Sigma\times\Sigma$ for every $e\in E$. 

The output is a labeling $\varphi:V\rightarrow\Sigma$ of the vertices.
Given a labeling, we say that a constraint (or edge) $\left(u,v\right)\in E$
is {\em satisfied} if $\varphi\left(u\right),\varphi\left(v\right)\in C_{\left(u,v\right)}$.
The {\em value of a labeling} is the fraction of $e\in E$ that
are satisfied by the labeling. The value of the instance is the maximum
fraction of constraints satisfied by any assignment.\end{defn}
\begin{thm}
[{PCP Theorem \cite{Dinur07-PCP}; see e.g. {\cite[Theorem~2.11]{BRKW15-DkS}} for this formulation}]\label{thm:pcp}

Given a 3SAT instance $\phi$ of size $n$, there is a polynomial
time reduction that produces a 2CSP instance $\psi$, with size $\left|\psi\right|=n\cdot\polylog n$
variables and constraints, and constant alphabet size, such that:
\begin{description}
\item [{Completeness}] If $\phi$ is satisfiable, then so is $\psi$.
\item [{Soundness}] If $\phi$ is not satisfiable, then at most a $\left(1-\eta\right)$-fraction
of the constraints in $\psi$ can be satisfied, for some $\eta=\Omega\left(1\right)$.
\item [{Balance}] Every variable in $\psi$ participates in exactly $d=O\left(1\right)$
constraints.
\end{description}
\end{thm}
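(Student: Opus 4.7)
I would prove this via Dinur's iterative gap-amplification approach, together with a standard balance (degree-reduction) step for the regularity property. The starting point is the trivial reduction from 3SAT to 2CSP: introduce one variable per 3SAT variable, and for every clause add $O(1)$ binary constraints on its three variables (using an auxiliary variable per clause, ranging over the $7$ satisfying assignments, if one wants literal 2-CSP form). This yields an $O(n)$-size 2CSP $\psi_0$ over constant alphabet with completeness $1$ and soundness $1 - \Omega(1/n)$, since any assignment violating the 3SAT instance violates at least one of $O(n)$ constraints. The remaining task is to amplify the $\Omega(1/n)$ gap up to an absolute constant $\eta > 0$, keeping the alphabet constant, the constraint graph $d$-regular, and the total size $n \cdot \polylog n$.

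The core of the proof is a one-round amplification transformation $\psi \mapsto \psi'$ on constant-alphabet 2CSPs that preserves completeness, multiplies the instance size by a modest factor, and at least doubles the gap (until it reaches a universal cap $\eta_0$). It proceeds in three stages: (i) preprocess $\psi$ so the constraint graph is $d$-regular and a spectral expander, by replacing each variable with a constant-size expander on its incident constraints glued by equality constraints; (ii) raise the constraint graph to a constant power $t$, so that new constraints check the labels on $t$-neighborhoods of the endpoints of a $t$-step walk --- expander mixing then amplifies the gap from $\varepsilon$ to $\Omega(\sqrt{t}\,\varepsilon)$ at the cost of blowing up the alphabet to $|\Sigma|^{d^t}$, still constant; (iii) compose with a constant-size assignment tester (an inner PCP, e.g.\ Hadamard-code based) to bring the alphabet back down to an absolute constant, losing only a constant factor in the gap. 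Iterating this transformation $O(\log n)$ times amplifies $\Omega(1/n)$ to the constant $\eta$, and a final invocation of Stage (i) delivers the balance/regularity property with degree $d = O(1)$.

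The main obstacle is controlling the total size to $n \cdot \polylog n$. Naively, a fixed multiplicative blowup $C$ per round yields $n \cdot C^{\log n} = \poly(n)$, not quasi-linear. Dinur's actual analysis sharpens this by a careful amortization: one arranges that the per-round size blowup is $1 + o(1)$ rather than a fixed constant (using a carefully chosen walk length $t$ and a constant-size inner assignment tester whose size overhead can be made arbitrarily small in the relevant sense), so that the cumulative logarithms of the blowups sum to only $O(\log \log n)$. Beyond this delicate size bookkeeping, the main substance of the argument is the expander-mixing soundness analysis of Stage (ii); completeness preservation, alphabet accounting, and degree reduction via expander clusters are all routine once the amplification step is set up. I expect the size accounting to be the hardest part to execute cleanly, whereas the combinatorial ingredients (expander replacement, graph powering, composition with an assignment tester) are by now standard building blocks.
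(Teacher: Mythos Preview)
The paper does not prove this theorem; it is stated in the Preliminaries and cited to \cite{Dinur07-PCP} (and to \cite{BRKW15-DkS} for this particular formulation), so there is no in-paper proof to compare against.

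Your sketch of Dinur's gap amplification is broadly correct for the \emph{polynomial-size} PCP theorem, and you correctly identify the size bound as the delicate point. However, your proposed fix for achieving $n\cdot\polylog n$ size is wrong. You cannot arrange the per-round blowup to be $1+o(1)$: each of the three stages --- expander replacement, graph powering by a fixed $t\ge 2$, and composition with a fixed-size assignment tester --- multiplies the instance size by an absolute constant strictly bounded away from $1$. Over $\Theta(\log n)$ rounds this unavoidably gives $\poly(n)$ blowup, not $n\cdot\polylog n$; no ``amortization'' of the kind you describe appears in Dinur's argument, and I do not see how it could be made to work within the powering/composition paradigm.

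The actual route to near-linear size in \cite{Dinur07-PCP} is different: one first invokes the algebraic near-linear PCP of Ben-Sasson and Sudan as a black box, obtaining a 2CSP of size $n\cdot\polylog n$ whose gap is already $1/\polylog n$ (rather than $\Omega(1/n)$). From that starting point only $O(\log\log n)$ rounds of gap amplification are needed to reach a constant gap $\eta$, and $O(\log\log n)$ rounds at constant blowup each give only a $\polylog n$ total size increase. So the missing idea is not a sharper per-round analysis but a better starting instance; without it, your outline yields the PCP theorem but not the quasi-linear size the statement requires.
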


\subsubsection*{Finding a good partition }
\begin{lem}
[{Essentially \cite[Lemma~6]{BPR15-PCP-PPAD}}]\label{lem:partition}

Let $G=\left(V,E\right)$ be a $d$-regular graph and $n\triangleq\left|V\right|$.
We can partition $V$ into $n/k$ disjoint subsets $\left\{ S_{1},\dots,S_{n/k}\right\} $
of size at most $2k$ such that:
\begin{equation}
\forall i,j\,\,\,\left|\left(S_{i}\times S_{j}\right)\cap E\right|\leq8d^{2}k^{2}/n\label{eq:small_intersection}
\end{equation}

\end{lem}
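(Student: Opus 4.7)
The plan is to prove the lemma by the probabilistic method: show that a uniformly random assignment $\pi:V\to[n/k]$ of vertices to groups yields a partition satisfying both conditions with positive probability. Let $S_i\triangleq\pi^{-1}(i)$. Each $|S_i|$ is a sum of $n$ independent $\mathrm{Bernoulli}(k/n)$'s with mean $k$, so a multiplicative Chernoff bound gives $\Pr[\,|S_i|>2k\,]\le e^{-\Omega(k)}$, and a union bound over the $n/k$ groups shows that all groups simultaneously satisfy $|S_i|\le 2k$ with high probability (as long as $k=\Omega(\log n)$, which is precisely the regime of the birthday-repetition application, $k\approx\sqrt{n}$).

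For the crossing-edge condition, fix a pair $(i,j)$ and let $X_{ij}\triangleq|(S_i\times S_j)\cap E|$. Each edge lands in $S_i\times S_j$ with probability at most $2(k/n)^2$, so
\[
\E[X_{ij}]\le |E|\cdot 2(k/n)^2=\frac{d k^2}{n},
\]
a factor $8d$ below the target $8d^2 k^2/n$. The edge indicators are nearly independent---the only correlations come from edges sharing an endpoint, of which there are at most $2(d-1)$ per edge in a $d$-regular graph---so I would apply Bernstein's inequality (equivalently, an Azuma bound on the vertex-exposure martingale whose bounded differences are $O(d)$) to obtain a sub-Gaussian tail $\Pr[X_{ij}>8d^2k^2/n]\le \exp(-\Omega(d^2k^2/n))$ whenever $dk^2/n$ dominates $\log(n/k)$. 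In the opposite sparse regime, where $dk^2/n\ll 1$, the target itself requires at least $\Omega(d)$ edges to cross $S_i\times S_j$, so a direct Markov/Poisson-tail computation bounds the same probability by $(dk^2/n)^{\Omega(d)}$, which is also sufficient. A union bound over all $(n/k)^2<n^2$ pairs preserves positive probability; combined with the size bound, this proves existence of the desired partition.

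\textbf{Main obstacle.} The delicate step is matching the constant $8$ in the target $8d^2 k^2/n$ against the loss incurred by the union bound over $(n/k)^2$ pairs, especially when $dk^2/n$ is small and standard Chernoff tails degenerate. The factor $8d$ of slack between expectation $dk^2/n$ and target $8d^2k^2/n$ is precisely what is needed to force $\Pr[X_{ij}>8d^2k^2/n]$ below $(k/n)^2$ in every regime. If, moreover, one insists on an explicit construction rather than a mere existence proof, the random partition can be derandomized by the method of conditional expectations applied simultaneously to the size potentials and the pairwise edge-count potentials, reproducing \cite[Lemma~6]{BPR15-PCP-PPAD} up to the constant in the bound.
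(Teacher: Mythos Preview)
Your approach is genuinely different from the paper's, which gives a deterministic greedy construction: vertices are placed one at a time, and at each step two applications of Markov's inequality show that fewer than half of the subsets are already full and, for each of the $\le d$ subsets $S_j$ already containing a neighbor of the current vertex $v$, fewer than a $1/(2d)$-fraction of candidate subsets $S_i$ already have $\ge 8d^{2}k^{2}/n$ edges to $S_j$. Summing over the $\le d$ relevant $j$'s rules out fewer than half of the $S_i$'s by the edge constraint, so a valid placement always exists.

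Your probabilistic argument, however, has a real gap exactly in the regime the paper uses. There $d=O(1)$ and $k=\sqrt{n}$, so $dk^{2}/n=d=\Theta(1)$: this neither dominates $\log(n/k)=\Theta(\log n)$ nor is it $\ll 1$, so it lies in neither of your two cases. Concretely, with $k=\sqrt{n}$ both the target $8d^{2}k^{2}/n=8d^{2}$ and the mean $\E[X_{ij}]=d$ are constants independent of $n$; the law of $X_{ij}$ is approximately Poisson$(d)$, and $\Pr[X_{ij}>8d^{2}]$ tends to a fixed positive constant $p_d$. The union bound over $(n/k)^{2}=n$ pairs then gives $n\cdot p_d\to\infty$, and indeed the expected number of violating pairs diverges, so a uniformly random partition will typically \emph{fail} to satisfy~\eqref{eq:small_intersection}. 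This is not a matter of sharpening the concentration or the constant, and the method of conditional expectations cannot rescue an argument whose starting potential is unbounded. The adaptive, per-vertex Markov argument is precisely what lets the paper's proof go through for all $n$ with $d$ constant.
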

\ifFULL
\begin{proof}
We assign vertices to subsets iteratively, and show by induction that
we can always maintain (\ref{eq:small_intersection}) and the bound
on the subset size. Since the average set size is less than $k$,
we have by Markov's inequality that at each step less than half of
the subsets are full. The next vertex we want to assign, $v$, has
neighbors in at most $d$ subsets. By our induction hypothesis, each
$S_{i}$ is of size at most $2k$, so on average over $j\in\left[n/k\right]$,
it has less than $4dk^{2}/n$ neighbors in each $S_{j}$. Applying
Markov's inequality again, $S_{i}$ has at least $8d^{2}k^{2}/n$
neighbors in less than a $\left(1/2d\right)$-fraction of subsets
$S_{j}$. In total, we ruled out less than half of the subsets for
being full, and less than half of the subsets for having too many
neighbors with subsets that contain neighbors of $v$. Therefore there
always exists some subset $S_{i}$ to which we can add $v$ while
maintaining the induction hypothesis.
\end{proof}
\fi

\subsubsection*{How to catch a far-from-uniform distribution}

The following lemma due to \cite{Daskalakis-Papadimitriou-PTAS} implies
that 
\begin{lem}
[Lemma 3 in the full version of \cite{Daskalakis-Papadimitriou-PTAS}]\label{lem:DP-lemma}

Let $\left\{ a_{i}\right\} _{i=1}^{n}$ be real numbers satisfying
the following properties for some $\theta>0$: (1) $a_{1}\geq a_{2}\geq\dots\geq a_{n}$;
(2) $\sum a_{i}=0$; (3) $\sum_{i=1}^{n/2}a_{i}\leq\theta$. Then
$\sum_{i=1}^{n}\left|a_{i}\right|\leq4\theta$. 
\end{lem}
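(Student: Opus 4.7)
The plan is to reduce everything to bounding the positive mass. Let $P \triangleq \sum_{i : a_i > 0} a_i$. Since $\sum_i a_i = 0$, the total absolute mass of the negative terms equals $P$ as well, so $\sum_i |a_i| = 2P$ and it suffices to prove $P \leq 2\theta$. I would then split into two cases according to the sign of $a_{n/2}$; the key point is that the sorted order forces the value at the ``boundary'' $a_{n/2}$ (or $a_{n/2+1}$) to be small in magnitude, which in turn bounds the part of the positive (resp.\ negative) mass that is not directly controlled by hypothesis (3).

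Case 1: $a_{n/2} > 0$. Then $a_1, \ldots, a_{n/2}$ are all positive, and hypothesis (3) bounds their sum by $\theta$. Since $a_{n/2}$ is the minimum of $n/2$ positive numbers summing to at most $\theta$, one has $a_{n/2} \leq 2\theta/n$. Every positive term past index $n/2$ is at most $a_{n/2}$, and there are at most $n/2$ of them, so their contribution is at most an additional $\theta$. Hence $P \leq 2\theta$.

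Case 2: $a_{n/2} \leq 0$. Then $a_{n/2+1}, \ldots, a_n$ are all non-positive, and by $\sum a_i = 0$ their absolute values sum to exactly $-\sum_{i=1}^{n/2} a_i \cdot \mathbf{1}[\cdot]$, equivalently $\sum_{i=n/2+1}^{n} |a_i| \leq \theta$ when combined with hypothesis (3). Since the sort is decreasing, the most negative terms sit at the end, so $|a_{n/2+1}|$ is the smallest of $|a_{n/2+1}|, \ldots, |a_n|$, giving $|a_{n/2+1}| \leq 2\theta/n$. Any negative term with $i \leq n/2$ has absolute value at most $|a_{n/2+1}|$, and there are at most $n/2$ of them, so the negative mass in the top half is at most $\theta$. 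Since every positive term lies in the top half in this case, $P = \sum_{i=1}^{n/2} a_i + \sum_{i \leq n/2,\, a_i \leq 0} |a_i| \leq \theta + \theta = 2\theta$.

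The argument has no real obstacle; it is just a careful case analysis exploiting that sortedness pins down either the largest positive term past the median (Case 1) or the smallest-magnitude negative term at the median (Case 2). One could alternatively invoke the symmetry $b_i \triangleq -a_{n+1-i}$, which satisfies the same hypotheses and swaps the two cases, but this shortcut buys little clarity, so I would present the direct split as above.
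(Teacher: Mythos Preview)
Your argument is correct. The paper does not give its own proof of this lemma; it is quoted verbatim from Daskalakis--Papadimitriou and used as a black box, so there is no in-paper proof to compare against. Your two-case split on the sign of $a_{n/2}$, combined with the averaging bound on the boundary term, cleanly yields $P\le 2\theta$ and hence $\sum_i |a_i|\le 4\theta$.

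One cosmetic remark: in Case~2 the sentence ``their absolute values sum to exactly $-\sum_{i=1}^{n/2} a_i \cdot \mathbf{1}[\cdot]$'' is garbled. What you mean (and what makes the step go through) is simply
\[
\sum_{i=n/2+1}^{n} |a_i| \;=\; -\sum_{i=n/2+1}^{n} a_i \;=\; \sum_{i=1}^{n/2} a_i \;\le\; \theta,
\]
where the first equality uses that all terms in the bottom half are non-positive and the second uses $\sum_i a_i=0$. This is exactly what your subsequent steps rely on, so just state it cleanly.
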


\section{Additive hardness}
\begin{thm}
There exists a constant $\epsilon>0$, such that assuming ETH, approximating
{\sc Zero-Sum Signaling} with payoffs in $\left[-1,1\right]$ to
within an additive $\epsilon$ requires time $n^{\tilde{\Omega}\left(\lg n\right)}$.
\end{thm}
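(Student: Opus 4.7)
\emph{Plan.} I would reduce from the gap-$2$CSP provided by Theorem \ref{thm:pcp}: a $2$CSP $\psi$ on $n'$ variables over a constant alphabet $\Sigma$, with constant degree $d$, which is either satisfiable or has optimal value at most $1-\eta$; here $n' = n \cdot \log^{O(1)} n$ where $n$ is the underlying $3$SAT input size. Apply Lemma \ref{lem:partition} with $k = \sqrt{n'}$ to partition the variables into $n'/k$ blocks $S_{1}, \ldots, S_{n'/k}$ each of size at most $2k$, with only $O(1)$ constraints between any two blocks. Encode this as a Bayesian zero-sum game whose pure strategies for both Alice and Bob are pairs $(i, \alpha)$ indexing a block and a labeling $\alpha \in \Sigma^{|S_{i}|}$ of that block, so the total strategy count is $N = (n'/k) \cdot |\Sigma|^{2k} = 2^{\tilde{O}(\sqrt{n'})}$.

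\emph{Game and payoff.} Each state of nature is parameterized by $(Z, \beta)$ with $Z$ a block and $\beta \in \Sigma^{|S_{Z}|}$ a candidate partial assignment; following the roadmap in Section \ref{sub:Techniques}, the prior places many such $(Z,\beta)$ pairs into the support, so that in the satisfiable case the signaler can correlate signals with partial restrictions of a single global assignment. Writing $(X, a)$ for Alice's and $(Y, b)$ for Bob's strategies, Alice's payoff is taken to be
\[
U \;=\; \delta\, \tau^{A,Z} \;-\; \delta^{2}\, \tau^{B,Z} \;+\; \delta^{3}\, \tau^{A,B},
\]
for a small constant $\delta \in (0, 1/3)$, rescaled to $[-1,1]$. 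Here $\tau^{A,Z}$ indicates that $a$ agrees with $\beta$ on $S_{X} \cap S_{Z}$, analogously $\tau^{B,Z}$, and $\tau^{A,B}$ is the fraction of constraints with one endpoint in $S_{X}$ and one in $S_{Y}$ satisfied by $(a,b)$. The hierarchy $\delta \gg \delta^{2} \gg \delta^{3}$ lexicographically orders the three objectives: Alice first maximizes $\tau^{A,Z}$, Bob first maximizes $\tau^{B,Z}$ (since it enters with $-\delta^{2}$), and only afterwards do they jointly affect $\tau^{A,B}$.

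\emph{Completeness and soundness.} If $\psi$ is satisfiable, fix a solution $\varphi^{*}$ and design the scheme so that every signal corresponds (up to conditioning on $\beta = \varphi^{*}|_{Z}$) to publicly revealing $\varphi^{*}|_{Z}$ for the relevant $Z$; Alice and Bob then play $(X, \varphi^{*}|_{X})$ and $(Y, \varphi^{*}|_{Y})$ for freely chosen blocks $X, Y$, forcing all three indicators to $1$ and yielding expected payoff $\delta - \delta^{2} + \delta^{3}$. If instead every assignment leaves an $\eta$-fraction of constraints unsatisfied, I would argue that any scheme whose payoff is within $\epsilon = c\eta\delta^{3}$ of this value must have Alice's and Bob's signal-conditional play almost always $\beta$-consistent (from the top two levels of the hierarchy). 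Combining $\beta$-consistency for both players with Lemma \ref{lem:DP-lemma} (applied to the players' marginal distributions over blocks to force them to be approximately uniform) would let me extract, for each signal $\sigma$, a single global assignment $\tilde\varphi_{\sigma}$ whose restrictions essentially describe both players' play. Then $\tau^{A,B}$ measures the fraction of constraints between the sampled pair $(X,Y)$ satisfied by $\tilde\varphi_{\sigma}$; by Lemma \ref{lem:partition} and the birthday paradox a constant fraction of all constraints are reached in expectation, so the $\le 1-\eta$ bound on $\tilde\varphi_{\sigma}$ yields an $\Omega(\eta\delta^{3})$ deficit.

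\emph{Putting it together; main obstacle.} Since $N = 2^{\tilde{O}(\sqrt{n})}$, an $N^{o(\log N)}$ approximation algorithm for \textsc{Zero-Sum Signaling} would give a $2^{o(n)}$-time algorithm for $3$SAT via this reduction, contradicting ETH. The hardest part will be the soundness extraction of a single shared $\tilde\varphi_{\sigma}$: ruling out randomized schemes that look $\beta$-consistent for each player separately via a correlated multi-modal posterior yet correspond to no common global assignment. The $\delta$-hierarchy together with Lemma \ref{lem:DP-lemma} and the block structure guaranteed by Lemma \ref{lem:partition} are what should make this decoupling possible, and carefully tracking the approximation losses through these steps is the technical heart of the proof.
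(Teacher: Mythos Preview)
Your high-level plan---birthday repetition via the block partition of Lemma~\ref{lem:partition}, together with the $\delta$-hierarchy $U=\delta\tau^{A,Z}-\delta^{2}\tau^{B,Z}+\delta^{3}\tau^{A,B}$---matches the paper. But the construction as you describe it has a genuine gap in the \emph{completeness} direction, and this is precisely the obstacle the paper calls ``The challenge'' in Section~\ref{sub:Techniques}.

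In your game, nature draws a block $Z$ and a \emph{uniformly random} partial assignment $\beta\in\Sigma^{|S_Z|}$. With overwhelming probability $\beta$ is not the restriction of any satisfying assignment, yet the (honest) signaler must emit a signal on \emph{every} state of nature. Your completeness sketch says to ``condition on $\beta=\varphi^{*}|_{Z}$'' and then reveal $\varphi^{*}|_{Z}$; but conditioning is not something the signaler can do---on the $(1-|\Sigma|^{-2k})$-fraction of states where $\beta\neq\varphi^{*}|_{Z}$, any signal you send induces a posterior in which $\tau^{A,Z}$ cannot be made $1$ (Alice cannot agree with a random $\beta$). The paper resolves this with an extra layer you omit: it introduces $|\Sigma|^{2k}$ ``subgames'' indexed by $\vec v$, in which all assignments are XOR-shifted by $\vec v$. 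Given nature's random $\vec u$, the signaler chooses $\vec v$ so that $\vec v\oplus_{\Sigma}\vec u$ equals the satisfying assignment on $S_i$; thus \emph{every} state of nature is rotated into a good one. To steer both players into the designated subgame without revealing $i$, the paper adds the pairwise-independent bits $b(\vec v)$ and the payoff term $U_b^A$; your proposal has no analogue of either device.

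Two further points. First, the blocks $S_X$ and $S_Z$ are disjoint, so your $\tau^{A,Z}$ (``$a$ agrees with $\beta$ on $S_X\cap S_Z$'') is vacuously $1$ unless $X=Z$; in the paper $\tau^{A,Z}$ checks the $\psi$-\emph{constraints between} $S_{j^A}$ and $S_i$, which is what makes the birthday argument bite. Second, citing Lemma~\ref{lem:DP-lemma} is not enough to force near-uniform marginals over blocks: that lemma only \emph{detects} non-uniformity. The paper plants Alth{\"o}fer's gadget (the $T\in\binom{[n/k]}{n/2k}$ component and the payoff $U_{\text{Althofer}}^A$) so that Bob can \emph{punish} any detected deviation; without such a term in your payoff there is nothing preventing Alice from concentrating on a single block and trivializing $\tau^{A,B}$. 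Once you add the XOR/subgame layer, the $b$-guessing term, and the Alth{\"o}fer gadget, your outline becomes the paper's proof.
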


\subsubsection*{Construction overview}

Our reduction begins with a 2CSP $\psi$ over $n$ variables from
alphabet $\Sigma$. We partition the variables into $n/k$ disjoint
subsets $\left\{ S_{1},\dots,S_{n/k}\right\} $, each of size at most
$2k$ for $k=\sqrt{n}$ such that every two subsets share at most
a constant number of constraints.

Nature chooses a random subset $S_{i}$ from the partition, a random
assignment $\vec{u}\in\Sigma^{2k}$ to the variables in $S_{i}$,
and an auxiliary vector $\hat{b}\in\left\{ 0,1\right\} ^{\Sigma\times\left[2k\right]}$.
As mentioned in Section \ref{sub:Techniques}, $\vec{u}$ may not
correspond to any satisfying assignment. Alice and Bob participate
in one of $\left|\Sigma\right|^{2k}$ subgames; for each $\vec{v}\in\Sigma^{2k}$,
there is a corresponding subgame where all the assignments are XOR-ed
with $\vec{v}$. The goal of the auxiliary vector $\hat{b}$ is to
force Alice and Bob to participate in the right subgame, i.e. the
one where the XOR of $\vec{v}$ and $\vec{u}$ can be completed to
a full satisfying assignment. In particular, the optimum signaling
scheme reveals partial information about $\hat{b}$ in a way that
guides Alice and Bob to participate in the right subgame. The scheme
also outputs the full satisfying assignment, but reveals no information
about the subset $S_{i}$ chosen by nature.

Each player has $\left(\left|\Sigma\right|^{2k}\times2\right)\times\left(n/k\times{n/k \choose n/2k}\times\left|\Sigma\right|^{2k}\right)=2^{\Theta\left(\sqrt{n}\right)}$
strategies. The first $\left|\Sigma\right|^{2k}$ strategies correspond
to a $\Sigma$-ary vector $\vec{v}$ that the scheme will choose after
observing the random input. The signaling scheme forces both players
to play (w.h.p.) the strategy corresponding to $\vec{v}$ by controlling
the information that corresponds to the next $2$ strategies. Namely,
for each $\vec{v}'\in\Sigma^{2k}$, there is a random bit $b\left(\vec{v}'\right)$
such that each player receives a payoff of $1$ if they play $\left(\vec{v}',b\left(\vec{v}'\right)\right)$
and $0$ for $\left(\vec{v}',1-b\left(\vec{v}'\right)\right)$. The
$b$'s are part of the state of nature, and the optimal signaling
scheme will reveal only the bit corresponding to the special $\vec{v}$.
Since there are $\left|\Sigma\right|^{2k}$ bits, nature cannot choose
them independently, as that would require $2^{\left|\Sigma\right|^{2k}}$
states of nature. Instead we construct a pairwise independent distribution.

The next $n/k$ strategies correspond to the choice of a subset $S_{i}$
from the specified partition of variables. The ${n/k \choose n/2k}$
strategies that follow correspond to a gadget due to Althofer \cite{Alt94}
whereby each player forces the other player to randomize (approximately)
uniformly over the choice of subset. 

The last $\left|\Sigma\right|^{2k}$ strategies correspond to an assignment
to $S_{i}$. The assignment to each $S_{i}$ is XOR-ed entry-wise
with $\vec{v}$. Then, the players are paid according to checks of
consistency between their assignments, and a random assignment to
a random $S_{i}$ picked by nature. (The scheme chooses $\vec{v}$
so that nature's random assignment is part of a globally satisfying
assignment.) Each player wants to pick an assignment that passes the
consistency check with nature's assignment. Alice also receives a
small bonus if her assignment agrees with Bob's; thus her payoff is
maximized when there exists a globally satisfying assignment.

\subsubsection*{Formal construction}

Let $\psi$ be a 2CSP-$d$ over $n$ variables from alphabet $\Sigma$,
as guaranteed by Theorem \ref{thm:pcp}. In particular, ETH implies
that distinguishing between a completely satisfiable instance and
$\left(1-\eta\right)$-satisfiable requires time $2^{\tilde{\Omega}\left(n\right)}$.
By Lemma \ref{lem:partition}, we can (deterministically and efficiently)
partition the variables into $n/k$ subsets $\left\{ S_{1},\dots,S_{n/k}\right\} $
of size at most $2k=2\sqrt{n}$, such that every two subsets share
at most $8d^{2}k^{2}/n=O\left(1\right)$ constraints.

\paragraph{States of nature}

Nature chooses a state $\left(\hat{b},i,\vec{u}\right)\in\left\{ 0,1\right\} ^{\Sigma\times\left[2k\right]}\times\left[n/k\right]\times\Sigma^{2k}$
uniformly at random. For each $\vec{v}$, $b\left(\vec{v}\right)$
is the XOR of bits from $\hat{b}$ that correspond to entries of $\vec{v}$:
\[
\forall\vec{v}\in\Sigma^{2k}\,\,\, b\left(\vec{v}\right)\triangleq\left(\bigoplus_{\left(\sigma,\ell\right)\colon\left[\vec{v}\right]_{\ell}=\sigma}\left[\hat{b}\right]_{\left(\sigma,\ell\right)}\right).
\]
Notice that the $b\left(\vec{v}\right)$'s are pairwise independent
and each marginal distribution is uniform over $\left\{ 0,1\right\} $.

\paragraph{Strategies}

Alice and Bob each choose a strategy $\left(\vec{v},c,j,T,\vec{w}\right)\in\Sigma^{2k}\times\left\{ 0,1\right\} \times\left[n/k\right]\times{\left[n/k\right] \choose n/2k}\times\Sigma^{2k}$.
We use $\vec{v}^{A},c^{A}$, etc. to denote the strategy Alice plays,
and similarly $\vec{v}^{B},c^{B}$, etc. for Bob. For $\sigma,\sigma'\in\Sigma$,
we denote $\sigma\oplus_{\Sigma}\sigma'\triangleq\sigma+\sigma'\pmod{\left|\Sigma\right|}$,
and for vectors $\vec{v},\vec{v}'\in\Sigma^{2k}$, we let $\vec{v}\oplus_{\Sigma}\vec{v}'\in\Sigma^{2k}$
denote the entry-wise $\oplus_{\Sigma}$. When $\vec{v}^{A}=\vec{v}^{B}=\vec{v}$,
we set $\tau^{A,Z}=1$ if assignments $\left(\vec{v}\oplus_{\Sigma}\vec{w}^{A}\right)$
and $\left(\vec{v}\oplus_{\Sigma}\vec{u}\right)$ to subsets $S_{j^{A}}$
and $S_{i}$, respectively, satisfy all the constraints in $\psi$
that are determined by $\left(S_{i}\cup S_{j^{A}}\right)$, and $\tau^{A,Z}=0$
otherwise. Similarly, $\tau^{B,Z}=1$ iff $\left(\vec{v}\oplus_{\Sigma}\vec{w}^{B}\right)$
and $\left(\vec{v}\oplus_{\Sigma}\vec{u}\right)$ satisfy the corresponding
constraints in $\psi$; and $\tau^{A,B}$ checks $\left(\vec{v}\oplus_{\Sigma}\vec{w}^{A}\right)$
and $\left(\vec{v}\oplus_{\Sigma}\vec{w}^{B}\right).$ When $\vec{v}^{A}\neq\vec{v}^{B}$,
we set $\tau^{A,Z}=\tau^{B,Z}=\tau^{A,B}=0$.

\paragraph{Payoffs}

Given state of nature $\left(\hat{b},i,\vec{u}\right)$ and players'
strategies $\left(\vec{v}^{A},c^{A},j^{A},T^{A},\vec{w}^{A}\right)$
and $\left(\vec{v}^{B},c^{B},j^{B},T^{B},\vec{w}^{B}\right)$, We
decompose Alice's payoff as:
\[
U^{A}\triangleq U_{b}^{A}+U_{\text{Althofer}}^{A}+U_{\psi}^{A},
\]
where
\[
U_{b}^{A}\triangleq\mathbf{1}\left\{ c^{A}=b\left(\vec{v}^{A}\right)\right\} -\mathbf{1}\left\{ c^{B}=b\left(\vec{v}^{B}\right)\right\} ,
\]
\[
U_{\text{Althofer}}^{A}\triangleq\mathbf{1}\left\{ j^{B}\in T^{A}\right\} -\mathbf{1}\left\{ j^{A}\in T^{B}\right\} ,
\]
 and
\begin{equation}
U_{\psi}^{A}\triangleq\delta\tau^{A,Z}-\delta^{2}\tau^{B,Z}+\delta^{3}\tau^{A,B},\label{eq:U^A-additive}
\end{equation}
for a sufficiently small constant $0<\delta\ll\sqrt{\eta}$.

\subsubsection*{Completeness}
\begin{lem}
\label{lem:additive-completeness}If $\psi$ is satisfiable, there
exists a signaling scheme and a mixed strategy for Alice that guarantees
expected payoff $\delta-\delta^{2}+\delta^{3}$. \end{lem}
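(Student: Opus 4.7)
My plan is to exhibit an explicit signaling scheme together with an explicit mixed strategy for Alice, and then verify the bound by decomposing $\E[U^A]=\E[U_b^A]+\E[U_{\text{Althofer}}^A]+\E[U_\psi^A]$ and analyzing each summand against Bob's best response. I will fix any satisfying assignment $\varphi^\ast:V\to\Sigma$ of $\psi$ and define the scheme to broadcast, on state $(\hat b,i,\vec u)$, the signal $\sigma=(\vec v,b(\vec v))$ where $\vec v\in\Sigma^{2k}$ is the unique vector with $\vec v\oplus_\Sigma\vec u=\varphi^\ast|_{S_i}$; since $\vec u$ is uniform given $i$, the marginal of $\vec v$ is uniform on $\Sigma^{2k}$, so the choice of $i$ is perfectly hidden. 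Alice's strategy, on signal $(\vec v,b)$, will play $\vec v^A=\vec v$, $c^A=b$, $j^A$ uniform in $[n/k]$, $T^A$ uniform over half-sized subsets of $[n/k]$, and $\vec w^A$ the unique vector with $\vec v\oplus_\Sigma\vec w^A=\varphi^\ast|_{S_{j^A}}$.

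For the first two summands, I will use that Alice plays $c^A=b=b(\vec v^A)$ with probability $1$, so $\E[U_b^A]=1-\Pr[c^B=b(\vec v^B)]\ge 0$; moreover, if Bob deviates to $\vec v^B\ne\vec v$, then pairwise independence of $\{b(\vec v')\}_{\vec v'}$ makes $b(\vec v^B)$ uniform conditional on $\sigma$, forcing $\E[U_b^A]=1/2$. Alice's uniform $(j^A,T^A)$ gives $\Pr[j^B\in T^A]=\Pr[j^A\in T^B]=1/2$ regardless of Bob's choice, so $\E[U_{\text{Althofer}}^A]=0$ always.

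The heart of the argument is $\E[U_\psi^A]$. In the case $\vec v^B=\vec v$, Alice's assignment to $S_{j^A}$ and nature's to $S_i$ are both restrictions of $\varphi^\ast$, so $\tau^{A,Z}=1$ always, and the key identity I will establish is
\[
\E_i[\tau^{B,Z}]=\E_{j^A}[\tau^{A,B}]
\]
for any fixed $(j^B,\vec w^B)$: both sides equal the probability, taken over a uniformly random block $S$ of the partition $\{S_1,\ldots,S_{n/k}\}$, that Bob's assignment $\vec v\oplus_\Sigma\vec w^B$ to $S_{j^B}$ is consistent with $\varphi^\ast|_{S}$ on every constraint of $\psi$ determined by $S\cup S_{j^B}$, with $S=S_i$ playing the role for $\tau^{B,Z}$ and $S=S_{j^A}$ playing it for $\tau^{A,B}$. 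This identity yields $\E[U_\psi^A]=\delta-(\delta^2-\delta^3)\E[\tau^{B,Z}]\ge\delta-\delta^2+\delta^3$, since $\delta^2>\delta^3$ and $\E[\tau^{B,Z}]\le 1$. In the remaining case $\vec v^B\neq\vec v$, the construction forces all three $\tau$-indicators to $0$, so $U_\psi^A=0$; but the $\tfrac{1}{2}$ gain in $U_b^A$ absorbs the lost $\delta-\delta^2+\delta^3$ for sufficiently small $\delta$, so Alice's total payoff still exceeds $\delta-\delta^2+\delta^3$.

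The step I anticipate requiring the most care is the symmetry identity $\E_i[\tau^{B,Z}]=\E_{j^A}[\tau^{A,B}]$, especially handling the edge cases $j^A=j^B$ or $j^B=i$, where ``constraints between two subsets'' degenerates to ``constraints within one subset''. However, Alice's $\vec w^A$ is, by design, the exact analogue of nature's $\vec u$ on a uniformly random block, so a term-by-term matching of the two expectations over $[n/k]$ should establish the identity.
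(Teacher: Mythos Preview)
Your proposal is correct and follows essentially the same approach as the paper: exhibit the signal $(\vec v,b(\vec v))$ derived from a fixed satisfying assignment, have Alice play the corresponding restriction on a uniformly random block, and then use the symmetry identity $\E_i[\tau^{B,Z}]=\E_{j^A}[\tau^{A,B}]$ together with $\tau^{A,Z}\equiv 1$ to conclude $\E[U_\psi^A]\ge\delta-\delta^2+\delta^3$, with the $\tfrac12$ gain in $U_b^A$ covering the case $\vec v^B\neq\vec v$. The only cosmetic difference is that the paper appends $\vec\beta_1,\dots,\vec\beta_{n/k}$ to the signal, but since each $\vec\beta_j$ is a deterministic function of $\vec v$ (and the fixed assignment), this adds no information and the two schemes are equivalent; your worry about the edge cases $j^A=j^B$ or $j^B=i$ is unfounded, as the term-by-term matching over $[n/k]$ you describe handles them uniformly.
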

\begin{proof}
Fix a satisfying assignment $\vec{\alpha}\in\Sigma^{n}$. Given state
of nature $\left(\hat{b},i,\vec{u}\right)$, let $\vec{v}$ be such
that $\left(\vec{v}\oplus_{\Sigma}\vec{u}\right)=\left[\vec{\alpha}\right]_{S_{i}}$.
For each $j\in\left[n/k\right]$, let $\vec{\beta}_{j}$ be such that
$\left(\vec{v}\oplus_{\Sigma}\vec{\beta}_{j}\right)=\left[\vec{\alpha}\right]_{S_{j}}$.
(Notice that $\vec{\beta}_{i}=\vec{u}$.) The scheme outputs the signal
$\left(\vec{v},b\left(\vec{v}\right),\vec{\beta}_{1},\dots\vec{\beta}_{n/k}\right)$.
Alice's mixed strategy sets $\left(\vec{v}^{A},c^{A}\right)=\left(\vec{v},b\left(\vec{v}\right)\right)$,
picks $j^{A}$ and $T^{A}$ uniformly at random, and sets $\vec{w}^{A}=\vec{\beta}_{j^{A}}$.

Because Bob has no information about $b\left(\vec{v}'\right)$ for
any $\vec{v}'\neq\vec{v}$, he has probability $1/2$ of losing whenever
he picks $\vec{v}^{B}\neq\vec{v}$, i.e. $\E\left[U_{b}^{A}\right]\geq\frac{1}{2}\Pr\left[\vec{v}^{B}\neq\vec{v}\right]$.
Furthermore, because Alice chooses $T^{A}$ and $j^{A}$ uniformly,
$\E\left[U_{\text{Althofer}}^{A}\right]=0$.

Since $\vec{\alpha}$ completely satisfies $\psi$, we have that $\tau^{A,Z}=1$
as long as $\vec{v}^{B}=\vec{v}$ (regardless of the rest of Bob's
strategy). Bob's goal is thus to maximize $\E\left[\delta^{2}\tau^{B,Z}-\delta^{3}\tau^{A,B}\right]$.
Notice that $\left(\vec{v}\oplus_{\Sigma}\vec{w}^{A}\right)$ and
$\left(\vec{v}\oplus_{\Sigma}\vec{u}\right)$ are two satisfying partial
assignments to uniformly random subsets from the partition. In particular,
they are both drawn from the same distribution, so we have that for
any mixed strategy that Bob plays, $\E\left[\tau^{B,Z}\right]=\E\left[\tau^{A,B}\right]$.
Therefore Alice's payoff is at least
\[
\left(\delta-\delta^{2}+\delta^{3}\right)\Pr\left[\vec{v}^{B}=\vec{v}\right]+\frac{1}{2}\Pr\left[\vec{v}^{B}\neq\vec{v}\right]\geq\delta-\delta^{2}+\delta^{3}.
\]
\ifFULL \else \vspace{-1cm} \fi
\end{proof}

\subsubsection*{Soundness}
\begin{lem}
\label{lem:additive-soundness}If at most a $\left(1-\eta\right)$-fraction
of the constraints are satisfiable, Alice's maxmin payoff is at most
$\delta-\delta^{2}+\left(1-\Omega_{\eta}\left(1\right)\right)\delta^{3}$,
for any signaling scheme.\end{lem}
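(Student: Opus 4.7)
The plan is to show that for every signaling scheme and every mixed strategy of Alice, Bob has a response that keeps $\E[U^A]$ below the claimed bound; since the per-signal game is zero-sum, this upper-bounds the maxmin value. I would dichotomize on how far Alice's marginal $p^A$ over $j^A \in [n/k]$ is from uniform, measured by $\kappa := \bigl(\text{sum of the top } n/(2k) \text{ entries of } p^A\bigr) - \tfrac12$.

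In Case 1 ($\kappa$ above a threshold of order $\delta^2$), Bob plays an Althofer exploit: $T^B$ equals those top entries, $j^B$ is uniform, and he copies Alice's distribution on the $(\vec v^B, c^B)$ and $\vec w^B$ coordinates. Symmetry of the $U_b$ gadget then gives $\E[U_b^A] = 0$, the Althofer exploit yields $\E[U_{\text{Althofer}}^A] = -\kappa$, and the trivial bound $U_\psi^A \leq \delta + \delta^3$ applies. Combining, $\E[U^A] \leq \delta + \delta^3 - \kappa$, at most the claimed bound by choice of threshold.

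In Case 2 ($\kappa$ below the threshold), Bob plays the full copycat strategy, mirroring Alice's mixed strategy on every coordinate. Symmetry gives $\E[U_b^A] = \E[U_{\text{Althofer}}^A] = 0$ and $\E[\tau^{A,Z}] = \E[\tau^{B,Z}]$, so $\E[U^A] = (\delta - \delta^2)\E[\tau^{A,Z}] + \delta^3 \E[\tau^{A,B}]$, reducing the claim to $\E[\tau^{A,B}] \leq 1 - \Omega_\eta(1)$. To prove this, I would extract Alice's per-variable marginals $\pi^A_u(a) := \Pr[(\vec v^\sigma \oplus \vec w^A)(u) = a \mid j^A = j(u)]$; sampling each variable independently from $\pi^A_u$ produces a random global assignment whose expected satisfied fraction is $\tfrac{1}{|E|}\sum_e r_e$, where $r_e := \sum_{(a,b) \in C_e} \pi^A_u(a)\pi^A_v(b)$. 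Since every integer assignment of $\psi$ satisfies at most $1-\eta$ of its constraints, so does this expectation, giving $\sum_e(1-r_e) \geq \eta|E|$. Under the copycat, $r_e$ is exactly the probability that an inter-edge $e$ is satisfied in $\tau^{A,B}$, and intra-edges are only easier thanks to Alice's correlations. Splitting the violation budget between intra and inter contributions and invoking Lemma~\ref{lem:partition} to bound edges per subset pair, a standard averaging argument (dividing the expected number of violated edges in $E_{j^A, j^B}$ by the maximum possible) yields $\Pr[\tau^{A,B} = 0] = \Omega_\eta(1)$, modulo an $O(\kappa) = O(\delta^2)$ perturbation from $p^A$ being only near-uniform that is absorbed by picking $\delta \ll \eta$.

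The main obstacle is this last step: Alice's per-subset joint distributions $\nu^A_{\sigma, j}$ can be arbitrary and internally correlated, so $\tau^{A,B}$'s failure probability is not obviously tied to $\psi$'s $(1-\eta)$-unsatisfiability. The random-rounding argument bridges the gap by relying only on Alice's per-variable marginals, which still inherit the $(1-\eta)$ bound from the integer statement.
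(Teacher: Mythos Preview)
Your outline is essentially the paper's argument: Althofer gadget forces $p^A$ to be $O(\delta^2)$-close to uniform (your Case~1/Case~2 dichotomy is the paper's ``wlog''), Bob mirrors Alice to kill $U_b^A$ and to make $\E[\tau^{B,Z}]\approx\E[\tau^{A,Z}]$, and then one argues $\E[\tau^{A,B}]\le 1-\Omega_\eta(1)$ from the $(1-\eta)$-unsatisfiability together with the $O(1)$ bound on edges per subset pair. Your full-copycat variant (Bob copies Alice's entire mixed strategy, including the marginal on $j$) is in fact slightly cleaner than the paper's choice of uniform $j^B$ plus conditional copying of $\vec w$: you get $\E[\tau^{A,Z}]=\E[\tau^{B,Z}]$ and $\E[U_{\text{Althofer}}^A]=0$ exactly, rather than up to an $O(\delta^2)$ slack, and the near-uniformity of $p^A$ is then used only in the $\tau^{A,B}$ step.

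One point does not go through as written. The sentence ``intra-edges are only easier thanks to Alice's correlations'' is not justified: Alice's joint distribution on a subset $S_j$ can either raise or lower the satisfaction probability of an intra-edge relative to the product $r_e$ of marginals, so you cannot compare them. Fortunately the claim is also unnecessary. By Lemma~\ref{lem:partition} (applied with $i=j$) each $S_i$ contains at most $8d^2$ intra-edges, hence there are at most $8d^2\cdot n/k = O(d^2\sqrt{n})$ intra-edges in total, an $o(1)$ fraction of the $dn/2$ edges. Thus your random-rounding bound $\sum_e(1-r_e)\ge\eta|E|$ already gives $\sum_{\text{inter }e}(1-r_e)\ge(\eta-o(1))|E|$, and you can run the averaging argument over inter-edges only: $\Pr[\tau^{A,B}=0]\ge \Pr[\text{some inter-edge in }E_{j^A,j^B}\text{ violated}]$, and for inter-edges the independence of Alice and Bob really does give satisfaction probability exactly $r_e$. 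With this fix (and absorbing the $O(\delta^2)$ deviation of $p^A=p^B$ from uniform), the bound $\E[\tau^{A,B}]\le 1-\Omega_\eta(1)$ follows.
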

\begin{proof}
Fix any mixed strategy by Alice; we show that Bob can guarantee a
payoff of at least $-\left(\delta-\delta^{2}+\left(1-\Omega_{\eta}\left(1\right)\right)\delta^{3}\right)$.
On any signal, Bob chooses $\left(\vec{v}^{B},c^{B}\right)$ from
the same distribution that Alice uses for $\left(\vec{v}^{A},c^{A}\right)$.
He chooses $j^{B}$ uniformly, and picks $T^{B}$ so as to minimize
$\E\left[U_{\text{Althofer}}^{A}\right]$. Finally, for each $j^{B}$,
he draws $\vec{w}^{B}$ from the same marginal distribution that Alice
uses for $\vec{w}^{A}$ conditioning on $j^{A}=j^{B}$ (and uniformly
at random if Alice never plays $j^{A}=j^{B}$). By symmetry, $\E\left[U_{b}^{A}\right]=0$
and $\E\left[U_{\text{Althofer}}^{A}\right]\leq0$.

In this paragraph, we use Althoefer's gadget to argue that, wlog,
Alice's distribution over the choice of $j^{A}$ is approximately
uniform. In Althofer's gadget, Alice can guarantee an (optimal) expected
payoff of $0$ by randomizing uniformly over her choice of $j^{A}$
and $T^{A}$. By Lemma \ref{lem:DP-lemma}, if Alice's marginal distribution
over the choice of $j^{A}$ is $8\delta^{2}$-far from uniform (in
total variation distance), then Bob can guess that $j^{A}$ is in
some subset $T^{B}\in{\left[n/k\right] \choose n/2k}$ with advantage
(over guessing at random) of at least $2\delta^{2}$. Therefore $\E\left[U_{\text{Althofer}}^{A}\right]\leq-2\delta^{2}$;
but this would imply $\E\left[U^{A}\right]\leq-2\delta^{2}+\E\left[U_{\psi}^{A}\right]\leq\delta-2\delta^{2}+\delta^{3}$.
So henceforth we assume wlog that Alice's marginal distribution over
the choice of $j^{A}$ is $O\left(\delta^{2}\right)$-close to uniform
(in total variation distance). 

Since Alice's marginal distribution over $j^{A}$ is $O\left(\delta^{2}\right)$-close
to uniform, we have that Bob's distribution over $\left(j^{B},\vec{w}^{B}\right)$
is $O\left(\delta^{2}\right)$-close to Alice's distribution over
$\left(j^{A},\vec{w}^{A}\right)$. Therefore $\E\left[\tau^{B,Z}\right]\geq\E\left[\tau^{A,Z}\right]-O\left(\delta^{2}\right)$,
and so we also get: 
\begin{gather}
\E\left[U^{A}\right]\leq\E\left[U_{\psi}^{A}\right]\leq\delta-\delta^{2}+\delta^{3}\E\left[\tau^{A,B}\right]+O\left(\delta^{4}\right).\label{eq:E=00005BU=00005D<T^=00007BA,B=00007D}
\end{gather}

\paragraph{Bounding $\E\left[\tau^{A,B}\right]$}

To complete the proof, it remains to show an upper bound on $\E\left[\tau^{A,B}\right]$.
In particular, notice that it suffices to bound the probability that
Alice's and Bob's induced assignments agree. Intuitively, if they
gave assignments to uniformly random (and independent) subsets of
variables, the probability that their assignments agree cannot be
much higher than the value of the 2CSP; below we formalize this intuition. 

By the premise, any assignment to all variables violates at least
an $\eta$-fraction of the constraints. In particular, this is true
in expectation for assignments drawn according to Alice's and Bob's
mixed strategy. This is a bit subtle: in general, it is possible that
Alice's assignment alone doesn't satisfy many constraints and neither
does Bob's, but when we check constraints between Alice's and Bob's
assignments everything satisfied (for example, think of the 3-Coloring
CSP, where Alice colors all her vertices blue, and Bob colors all
his vertices red). Fortunately, this subtlety is irrelevant for our
construction since we explicitly defined Bob's mixed strategy so that
conditioned on each set $S_{j}$ of variables, Alice and Bob have
the same distribution over assignments. 

The expected number of violations between pairs directly depends on
the value of the 2CSP. To bound the {\em probability} of observing
at least one violations, recall that every pair of subsets shares
at most a constant number of constraints, so this probability is within
a constant factor of the expected number of violations. In particular,
an $\Omega\left(\eta\right)$-fraction of the pairs of assignments
chosen by Alice and Bob violate $\psi$. 

Finally, Alice doesn't choose $j^{A}$ uniformly at random; but her
distribution is $O\left(\delta^{2}\right)$-close to uniform. Therefore,
we have $\E\left[\tau^{A,B}\right]\leq1-\Omega\left(\eta\right)+O\left(\delta^{2}\right)$.
Plugging into (\ref{eq:E=00005BU=00005D<T^=00007BA,B=00007D}) completes
the proof.
\end{proof}

\section{\label{sec:Multiplicative-hardness}Multiplicative hardness}
\begin{thm}
\label{thm:multiplicative}There exists a constant $\epsilon>0$,
such that it is \NP-hard to approximate {\sc Zero-Sum Signaling}
to within a multiplicative $\left(1-\epsilon\right)$ factor.
\end{thm}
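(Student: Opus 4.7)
The plan is to recycle the additive-hardness construction with two modifications, aimed at (i) achieving a polynomial-size (rather than quasi-polynomial) reduction, since we want \NP-hardness, and (ii) converting the additive gap into a constant multiplicative gap.

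For (i), I would instantiate the partition with $k=1$ so that each $S_j$ is a single variable. This makes the factors $|\Sigma|^{2k}$ and $2^{|\Sigma|\cdot 2k}$ constant and renders all the strategy and state spaces polynomial in $n$, except for the $\binom{n/k}{n/(2k)}$ factor coming from Althofer's gadget. I would replace that factor with a polynomial-size substitute: each player picks a single predicted index $T\in[n]$, with a suitably scaled matching-pennies payoff that still penalizes any $\Theta(\delta^2)$ deviation of $j^A$ from uniform, so that the simulation argument of Lemma~\ref{lem:additive-soundness} still goes through.

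For (ii), I would uniformly shift Alice's payoff matrix by the constant $c=-(\delta-\delta^2)$; this preserves the zero-sum structure but kills the baseline contribution. Under this shift, completeness (Lemma~\ref{lem:additive-completeness}) yields value $\delta^3$, while the soundness bound (Lemma~\ref{lem:additive-soundness}) yields value at most $(1-\Omega(\eta))\delta^3$. Taking $\epsilon=\Theta(\eta)$ then gives the claimed multiplicative $(1-\epsilon)$-hardness. To verify the non-cancellation property mentioned in the introduction, I would decompose $U^A=U^+-U^-$ with $U^\pm\ge 0$ by splitting each $\tau$-term and the shift into positive and negative parts; this gives $\E[U^++U^-]=O(\delta)$, so the absolute gap $\Omega(\eta\delta^3)$ equals $\Omega(\eta\delta^2)\cdot\E[U^++U^-]$, a positive constant once $\eta$ and $\delta$ are fixed.

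The main obstacle I expect is calibrating the Althofer substitute: a single-index matching-pennies gadget has distinguishing power weaker by a factor of $n$ compared to Althofer's $\binom{n/k}{n/(2k)}$-sized version, so one likely needs a $\mathrm{poly}(n)$-sized family of test subsets (e.g.\ a pairwise-independent collection) in order to recover Althofer's near-uniformity guarantee while keeping the overall reduction polynomial. A secondary subtlety is that after the shift, the per-entry magnitudes of the payoff matrix increase, so one should rescale at the end to keep payoffs in $[-1,1]$; this only changes $\epsilon$ by a constant factor and does not affect the qualitative claim.
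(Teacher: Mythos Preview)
Your proposal has a real gap. Setting $k=1$ kills the birthday-paradox property that Lemma~\ref{lem:additive-soundness} relies on: with single-variable $S_j$'s, two uniformly random indices $j^A,j^B$ share a constraint only with probability $d/n$, not $\Omega(1)$. Under the additive construction's convention (where $\tau^{A,B}=1$ vacuously when no constraint is present), the soundness bound on $\E[\tau^{A,B}]$ degrades to $1-O(\eta/n)$ rather than $1-\Omega(\eta)$. After your shift the completeness and soundness values become $\delta^3$ and $(1-O(\eta/n))\delta^3$, so the multiplicative gap is only $1-O(1/n)$ and no constant $\epsilon$ survives. The shift itself is fine; it is the $k=1$ instantiation of the \emph{unchanged} $\tau$'s that fails.

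The paper fixes this not by shifting but by redefining $\tau^{A,Z},\tau^{B,Z},\tau^{A,B}$ to equal $0$ whenever the relevant pair shares no constraint (instead of vacuously $1$). All three $\tau$'s then have expectation $\Theta(d/n)$, and the optimal value becomes $\frac{d}{n}(\delta^3-\delta^4+\delta^5)$ in completeness versus $\frac{d}{n}\bigl(\delta^3-\delta^4+(1-\Omega(\eta))\delta^5\bigr)$ in soundness --- a constant multiplicative ratio with no shift needed. This redefinition creates a new threat: the scheme could leak nature's index $i$, letting Alice place $j^A$ on a $\psi$-neighbor of $i$ and push $\E[\tau^{A,Z}]$ well above $d/n$; the paper blocks this with an extra hide-and-seek coordinate $q^B$ for Bob against nature's $i$. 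A pleasant side effect is that your worry about the Althofer substitute disappears: since $U_\psi^A=O(1/n)$, the single-index hide-and-seek gadget already suffices, because its $1/n$ loss in distinguishing power exactly matches the $1/n$ scale of the payoff it needs to dominate.
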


\subsubsection*{Construction overview}

Our reduction begins with a 2CSP $\psi$ over $n$ variables from
alphabet $\Sigma$. 

Nature chooses a random index $i\in\left[n\right]$, a random assignment
$u\in\Sigma$ for variable $x_{i}$, and an auxiliary vector $\vec{b}\in\left\{ 0,1\right\} ^{\Sigma}$.
Notice that $u$ may not correspond to any satisfying assignment.
Alice and Bob participate in one of $\left|\Sigma\right|$ subgames;
for each $v\in\Sigma$, there is a corresponding subgame where all
the assignments are XOR-ed with $v$. The optimum signaling scheme
reveals partial information about $\vec{b}$ in a way that guides
Alice and Bob to participate in the subgame where the XOR of $v$
and $u$ can be completed to a full satisfying assignment. The scheme
also outputs the full satisfying assignment, but reveals no information
about the index $i$ chosen by nature.

Alice has $\left(\left|\Sigma\right|\times2\right)\times\left(n\times n\times\left|\Sigma\right|\right)=\Theta\left(n^{2}\right)$
strategies, and Bob has an additional choice among $n$ strategies
(so $\Theta\left(n^{3}\right)$ in total). The first $\left|\Sigma\right|$
strategies correspond to a value $v\in\Sigma$ that the scheme will
choose after observing the state of nature. The signaling scheme forces
both players to play (w.h.p.) the strategy corresponding to $v$ by
controlling the information that corresponds to the next $2$ strategies.
Namely, for each $v'\in\Sigma$, there is a random bit $b\left(v'\right)$
such that each player receives a small bonus if they play $\left(v',b\left(v'\right)\right)$
and not $\left(v',1-b\left(v'\right)\right)$. The $b$'s are part
of the state of nature, and the signaling scheme will reveal only
the bit corresponding to the special $v$. 

The next $n$ strategies correspond to a choice of a variable $j\in\left[n\right]$.
The $n$ strategies that follow correspond to a hide-and-seek gadget
whereby each player forces the other player to randomize (approximately)
uniformly over the choice of $j$. For Bob, the additional $n$ strategies
induce a hide-and-seek game against nature, which serves to verify
that the scheme does not reveal too much information about the state
of nature (this extra verification was unnecessary in the reduction
for additive inapproximability). 

The last $\left|\Sigma\right|$ strategies induce an assignment for
$x_{j}$. The assignment to each $x_{j}$ is XOR-ed with $v$. Then,
the players are paid according to checks of consistency between their
assignments, and a random assignment to a random $x_{i}$ picked by
nature. (The scheme chooses $v$ so that nature's random assignment
is part of a globally satisfying assignment.) Each player wants to
pick an assignment that passes the consistency check with nature's
assignment. Alice also receives a small bonus if her assignment agrees
with Bob's; thus her payoff is maximized when there exists a globally
satisfying assignment.

\subsubsection*{Formal construction}

Let $\psi$ be a 2CSP-$d$ over $n$ variables from alphabet $\Sigma$,
as guaranteed by Theorem \ref{thm:pcp}. In particular, it is \NP-hard
to distinguish between $\psi$ which is completely satisfiable, and
one where at most a $\left(1-\eta\right)$-fraction of the constraints
can be satisfied. We denote $\left(i,j\right)\in\psi$ if there is
a constraint over variables $\left(x_{i},x_{j}\right)$.

\paragraph{States of nature}

Nature chooses a state $\left(\vec{b},i,u\right)\in\left\{ 0,1\right\} ^{\Sigma}\times\left[n\right]\times\Sigma$
uniformly at random.

\paragraph{Strategies}

Alice chooses a strategy $\left(v^{A},c^{A},j^{A},t^{A},w^{A}\right)\in\Sigma\times\left\{ 0,1\right\} \times\left[n\right]\times\left[n\right]\times\Sigma$,
and Bob chooses $\left(v^{B},c^{B},j^{B},t^{B},q^{B},w^{B}\right)\in\Sigma\times\left\{ 0,1\right\} \times\left[n\right]\times\left[n\right]\times\left[n\right]\times\Sigma$.
For $\sigma,\sigma'\in\Sigma$, we denote $\sigma\oplus_{\Sigma}\sigma'\triangleq\sigma+\sigma'\pmod{\left|\Sigma\right|}$,
and for a vector $\vec{\alpha}\in\Sigma^{n}$ we let $\left(\sigma\oplus_{\Sigma}\vec{\alpha}\right)\in\Sigma^{n}$
denote the $\oplus_{\Sigma}$ of $\sigma$ with each entry of $\vec{\alpha}$.
When $v^{A}=v^{B}=v$, we set $\tau^{A,Z}=1$ if $\psi$ contains
a constraint for variables $\left(j^{A},i\right)$, and the assignments
$\left(v\oplus_{\Sigma}w^{A}\right)$ and $\left(v\oplus_{\Sigma}u\right)$
to those variables, respectively, satisfy this constraint, and $\tau^{A,Z}=0$
otherwise. Similarly, $\tau^{B,Z}=1$ iff $\left(v\oplus_{\Sigma}w^{B}\right)$
and $\left(v\oplus_{\Sigma}u\right)$ satisfy a corresponding constraint
in $\psi$; and $\tau^{A,B}$ checks $\left(v\oplus_{\Sigma}w^{A}\right)$
with $\left(v\oplus_{\Sigma}w^{B}\right).$ When $v^{A}\neq v^{B}$,
we set $\tau^{A,Z}=\tau^{B,Z}=\tau^{A,B}=0$.

\paragraph{Payoffs}

Given players' strategies $\left(v^{A},c^{A},j^{A},t^{A},w^{A}\right)$
and $\left(v^{B},c^{B},j^{B},t^{B},q^{B},w^{B}\right)$ and state
of nature $\left(\vec{b},i,u\right)$, We decompose Alice's payoff
as:
\[
U^{A}\triangleq U_{b}^{A}+U_{\text{seek}}^{A}+U_{\psi}^{A},
\]
where
\[
U_{b}^{A}\triangleq\mathbf{1}\left\{ c^{A}=\left[\vec{b}\right]_{v^{A}}\right\} /n-\mathbf{1}\left\{ c^{B}=\left[\vec{b}\right]_{v^{B}}\right\} /n,
\]
\[
U_{\text{seek}}^{A}\triangleq2\cdot\mathbf{1}\left\{ j^{B}=t^{A}\right\} -\mathbf{1}\left\{ j^{A}=t^{B}\right\} -\mathbf{1}\left\{ i=q^{B}\right\} ,
\]
 and%
\footnote{We use $\delta^{3}\tau^{A,Z}-\delta^{4}\tau^{B,Z}+\delta^{5}\tau^{A,B}$
instead of $\delta^{1}\tau^{A,Z}-\delta^{2}\tau^{B,Z}+\delta^{3}\tau^{A,B}$
as in \ref{eq:U^A-additive}, because the square of the first coefficient
appears in the proof. We have $\left(\delta^{3}\right)^{2}\ll\delta^{5}$,
but $\delta^{2}\gg\delta^{3}$.%
}
\[
U_{\psi}^{A}\triangleq\delta^{3}\tau^{A,Z}-\delta^{4}\tau^{B,Z}+\delta^{5}\tau^{A,B},
\]
for a sufficiently small constant $0<\delta\ll\sqrt{\eta}$.

\subsubsection*{Completeness}
\begin{lem}
\label{lem:multiplicative-completeness}If $\psi$ is satisfiable,
there exists a signaling scheme, such that for every signal $\mathbf{s}$
in the support, Alice can guarantee an expected payoff of $\frac{d}{n}\left(\delta^{3}-\delta^{4}+\delta^{5}\right)$. 
\end{lem}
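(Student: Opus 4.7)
The plan is to fix a satisfying assignment $\vec\alpha \in \Sigma^n$ of $\psi$ and define the signaling scheme that, given state $(\vec b, i, u)$, outputs $\mathbf{s} = (v, b^\star, \vec\alpha)$, where $v \in \Sigma$ is the unique element satisfying $v \oplus_\Sigma u = \alpha_i$ and $b^\star := [\vec b]_v$. The first step is to characterize the posterior: since for each $i \in [n]$ there is a unique $u$ compatible with the announced $v$, conditioning on $\mathbf{s}$ leaves $i$ uniform on $[n]$ (with $u$ determined by $i$ and $v$), and the remaining bits $\{[\vec b]_{v'}\}_{v' \ne v}$ stay i.i.d.\ uniform because only $[\vec b]_v = b^\star$ was revealed.

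I then prescribe Alice's strategy: read $v^A = v$ and $c^A = b^\star$ directly from the signal, draw $j^A$ and $t^A$ independently and uniformly from $[n]$, and choose $w^A$ so that $v \oplus_\Sigma w^A = \alpha_{j^A}$. Decomposing $U^A = U_b^A + U_{\text{seek}}^A + U_\psi^A$ and bounding each term against an arbitrary Bob response, the first two are straightforward. For any Bob strategy, $\E[U_{\text{seek}}^A] = 0$: since $t^A$, $j^A$, and (in the posterior) $i$ are all uniform and independent of Bob's strategy, each of the three indicators has expectation $1/n$, yielding $2/n - 1/n - 1/n = 0$. And $\E[U_b^A] \ge 0$: Alice's indicator fires with probability $1$, while Bob's fires with probability at most $1$, equalling $1$ only if $v^B = v$ and $1/2$ otherwise (because $[\vec b]_{v^B}$ is then uniform in the posterior).

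The heart of the proof is the analysis of $U_\psi^A$, and the crucial observation is a distributional symmetry between Alice and nature: Alice's pair $(j^A,\; v \oplus_\Sigma w^A) = (j^A, \alpha_{j^A})$ and nature's pair $(i,\; v \oplus_\Sigma u) = (i, \alpha_i)$ are identically distributed (both are ``a uniform index with its $\vec\alpha$-label'') and both are independent of Bob's strategy, which depends only on $\mathbf{s}$. Hence the same symmetric function of Bob's $(j^B, v \oplus_\Sigma w^B)$ is applied against two interchangeable inputs, giving $\E[\tau^{B,Z}] = \E[\tau^{A,B}] =: X$ for every Bob response. Since $\vec\alpha$ is satisfying, $\tau^{A,Z} = \mathbf{1}\{(j^A, i) \in \psi\}$ when $v^B = v$, and by $d$-regularity and independence of $j^A, i$, $\E[\tau^{A,Z}] = d/n$. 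Therefore, conditional on $v^B = v$,
\[
\E[U_\psi^A \mid v^B = v] \;=\; \tfrac{d}{n}\,\delta^3 \;-\; (\delta^4 - \delta^5)\,X \;\ge\; \tfrac{d}{n}\bigl(\delta^3 - \delta^4 + \delta^5\bigr),
\]
using $X \le d/n$ (each variable has only $d$ neighbors, and $j^B$ is independent of the uniform $i$).

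Finally I compare Bob's two branches. Under $v^B = v$ Alice collects at least $(d/n)(\delta^3 - \delta^4 + \delta^5)$; under $v^B \ne v$ all three $\tau$'s vanish so $U_\psi^A = 0$, while $\E[U_b^A] \ge 1/(2n)$. For $\delta$ sufficiently small the former is smaller, so Bob's best response is $v^B = v$ and Alice's guaranteed expected payoff per signal is the claimed $\tfrac{d}{n}(\delta^3 - \delta^4 + \delta^5)$. I expect the main obstacle to be spotting the distributional symmetry: without it one would only obtain the weaker bound $\tfrac{d}{n}(\delta^3 - \delta^4)$, because Bob could in principle deviate from $\vec\alpha$-consistency on $w^B$ to suppress $\tau^{A,B}$ without sacrificing much of $\tau^{B,Z}$; the symmetry rules this out in a single stroke.
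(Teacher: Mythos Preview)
Your proof is correct and follows essentially the same approach as the paper's: the same signaling scheme (up to the cosmetic choice of announcing $\vec{\alpha}$ directly rather than the shifted $\vec{\beta}$, which is equivalent since $v$ is also revealed), the same Alice strategy, and the same three-part decomposition with the key symmetry observation $\E[\tau^{B,Z}\mid\mathbf{s}]=\E[\tau^{A,B}\mid\mathbf{s}]$ and the bound $\E[\tau^{B,Z}\mid\mathbf{s}]\le d/n$ from uniformity of $i$. Your final branch comparison is exactly the paper's convex-combination inequality $\frac{d}{n}(\delta^{3}-\delta^{4}+\delta^{5})\Pr[v^{B}=v\mid\mathbf{s}]+\frac{1}{2n}\Pr[v^{B}\neq v\mid\mathbf{s}]\geq\frac{d}{n}(\delta^{3}-\delta^{4}+\delta^{5})$ spelled out termwise.
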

Notice that the {\em for every signal in the support} qualification
is different than the corresponding Lemma \ref{lem:additive-completeness}
(and there is a similar difference between Lemma \ref{lem:multiplicative-soundness}
and Lemma \ref{lem:additive-soundness}). Indeed, this is stronger
than we need for proving Theorem \ref{thm:multiplicative}, but will
come handy in Section \ref{sec:Lying}.
\begin{proof}
Fix a satisfying assignment $\vec{\alpha}\in\Sigma^{n}$. Given state
of nature $\left(\hat{b},i,u\right)$, let $v$ be such that $\left(v\oplus_{\Sigma}u\right)=\left[\vec{\alpha}\right]_{i}$,
and let $\vec{\beta}$ be such that $\left(v\oplus_{\Sigma}\vec{\beta}\right)=\vec{\alpha}$.
(Notice that $\left[\vec{\beta}\right]_{i}=u$.) The scheme outputs
the signal $\mathbf{s}\triangleq\left(v,\vec{b}_{v},\vec{\beta}\right)$.
Alice's mixed strategy sets $\left(v^{A},c^{A}\right)=\left(v,\vec{b}_{v}\right)$;
picks $j^{A}$ and $t^{A}$ uniformly at random; and sets $w^{A}=\left[\vec{\beta}\right]_{j^{A}}$.
\ifFULL 

Because Bob has no information about $\left[\vec{b}\right]_{v'}$
for any $v'\neq v$, he has probability $1/2$ of losing whenever
he picks $v^{B}\neq v$, i.e. $\E\left[U_{b}^{A}\mid\mathbf{s}\right]\geq\frac{1}{2n}\Pr\left[v^{B}\neq v\mid\mathbf{s}\right]$.
Furthermore, because Alice and nature draw $t^{A},j^{A}$ and $i$
uniformly at random, $\E\left[U_{\text{seek}}^{A}\mid\mathbf{s}\right]=0$.

Since $\vec{\alpha}$ completely satisfies $\psi$, we have that $\E\left[\tau^{A,Z}\mid\mathbf{s}\right]=\Pr\left[\left(j^{A},i\right)\in\psi\mid\mathbf{s}\right]=d/n$,
as long as $v^{B}=v$ (regardless of the rest of Bob's strategy).
Bob's goal is thus to maximize $\E\left[\delta^{4}\tau^{B,Z}-\delta^{5}\tau^{A,B}\mid\mathbf{s}\right]$.
However, since Alice's assignment and nature's assignment are drawn
from the same distribution, we have that for any mixed strategy Bob
plays, $\E\left[\tau^{B,Z}\mid\mathbf{s}\right]=\E\left[\tau^{A,B}\mid\mathbf{s}\right]$.
Finally, since $i$ is a uniformly random index (even when conditioning
on \textbf{$\mathbf{s}$)}, $\E\left[\tau^{B,Z}\mid\mathbf{s}\right]\leq d/n$.
Therefore Alice's payoff is at least
\[
\frac{d}{n}\left(\delta^{3}-\delta^{4}+\delta^{5}\right)\Pr\left[v^{B}=v\mid\mathbf{s}\right]+\frac{1}{2n}\Pr\left[v^{B}\neq v\mid\mathbf{s}\right]\geq\frac{d}{n}\left(\delta^{3}-\delta^{4}+\delta^{5}\right).
\]
\else  See full version for details. \fi
\end{proof}

\subsubsection*{Soundness}
\begin{lem}
\label{lem:multiplicative-soundness}If at most a $\left(1-\eta\right)$-fraction
of the constraints are satisfiable, then for any signaling scheme
and every signal $\mathbf{s}$ in the support, Alice's maxmin payoff
is at most $\frac{d}{n}\left(\delta^{3}-\delta^{4}+\left(1-\Omega\left(1\right)\right)\delta^{5}\right)$.\end{lem}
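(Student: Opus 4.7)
The plan is to mirror the soundness proof for the additive case (Lemma \ref{lem:additive-soundness}), with two key differences. First, we must establish the bound \emph{conditionally on each signal} $\mathbf{s}$ in the support, rather than only in expectation over signals. Second, the hide-and-seek gadget $U_{\text{seek}}^A$ now plays a dual role: Bob's choice of $t^B$ enforces approximate uniformity of Alice's marginal over $j^A$, while Bob's new strategy component $q^B$ additionally forces the posterior over nature's chosen variable $i$ given $\mathbf{s}$ to be approximately uniform.

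Fix $\mathbf{s}$ and fix Alice's mixed strategy conditional on $\mathbf{s}$. I would define Bob's counter-strategy as follows: draw $(v^B, c^B)$ from the same marginal as Alice's $(v^A, c^A)$; draw $(j^B, w^B)$ so that $j^B$ shares Alice's marginal on $j^A$, and $w^B$ conditional on $j^B = j$ follows Alice's conditional distribution of $w^A$ given $j^A = j$; pick $t^B = \arg\max_j \Pr[j^A = j \mid \mathbf{s}]$ and $q^B = \arg\max_{i^*} \Pr[i = i^* \mid \mathbf{s}]$. By symmetry of the $(v, c)$ components, $\E[U_b^A \mid \mathbf{s}] = 0$. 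The three terms in $U_{\text{seek}}^A$ have magnitude $O(1/n)$, commensurate with the $\delta^3 d / n$ scale of $U_\psi^A$, so I would argue that if Alice's marginal on $j^A$ or the conditional distribution of $i$ given $\mathbf{s}$ is more than $O(\delta^3 d)$-far from uniform in $L^\infty$ per coordinate, Bob's $t^B$ or $q^B$ makes $\E[U_{\text{seek}}^A \mid \mathbf{s}] \le -\Omega(\delta^3 d / n)$, which already drives $\E[U^A \mid \mathbf{s}]$ below the target bound. Hence without loss of generality, both distributions are $O(\delta^3 d)$-close to uniform.

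Next I bound the CSP terms. Since both Alice's and Bob's strategies are functions of $\mathbf{s}$ plus private randomness and hence conditionally independent of nature's $(i, u)$ given $\mathbf{s}$, Bob's mirroring of Alice's joint distribution on $(j, w)$ gives $\E[\tau^{B,Z} \mid \mathbf{s}] = \E[\tau^{A,Z} \mid \mathbf{s}]$. By near-uniformity of $j^A$ and $i$, we also have $\E[\tau^{A,Z} \mid \mathbf{s}] \le (d/n)(1 + O(\delta^3 d))$, so the first two terms of $U_\psi^A$ contribute at most $(\delta^3 - \delta^4)(d/n)(1 + O(\delta))$. The crux is bounding $\E[\tau^{A,B} \mid \mathbf{s}]$: since $j^A, j^B$ are conditionally independent and each near-uniform, while $w^A, w^B$ share the same conditional distribution given $j$, the resulting distribution on $[n] \times \Sigma$ induces a mixed assignment in $\Sigma^n$ to which the 2CSP soundness applies. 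This gives expected constraint satisfaction rate at most $1 - \eta$ on a random constraint, and since every pair $(j, j')$ supports at most $d$ constraints in $\psi$, $\E[\tau^{A,B} \mid \mathbf{s}] \le (d/n)(1 - \eta + O(\delta))$. Assembling the pieces yields $\E[U^A \mid \mathbf{s}] \le (d/n)(\delta^3 - \delta^4 + (1 - \Omega(\eta)) \delta^5)$, using $\delta \ll \sqrt{\eta}$ to absorb the error terms.

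The main obstacle is controlling the error terms tightly enough to preserve the correct $\delta^5$ coefficient. The deviations from uniform in $j^A$ and $i$ feed an $O(\delta^3 d)$ error into every subsequent calculation, and this error is multiplied by the $\delta^4$ coefficient on $\tau^{B,Z}$, producing an $O(\delta^7 d)$ correction that must remain dominated by the leading $\delta^5$ term. This is precisely the reason the construction uses coefficients $(\delta^3, \delta^4, \delta^5)$ rather than $(\delta, \delta^2, \delta^3)$ as in the additive case, as highlighted by the footnote: the quadratic error $(\delta^3)^2 = \delta^6$ stays comfortably below $\delta^5$, whereas $\delta^2 \gg \delta^3$ would have broken the analysis.
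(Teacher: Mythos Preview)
Your high-level plan is right and closely tracks the paper, but there is a concrete bug in your hide-and-seek analysis. You let Bob draw $j^{B}$ from Alice's marginal $p$ on $j^{A}$, and then claim that if $p$ is far from uniform, setting $t^{B}=\arg\max_{j}p_{j}$ makes $\E[U_{\text{seek}}^{A}\mid\mathbf{s}]\le-\Omega(\delta^{3}d/n)$. This is false. With $j^{B}\sim p$, Alice can set $t^{A}=j^{*}\triangleq\arg\max_{j}p_{j}$ as well, giving
\[
\E[U_{\text{seek}}^{A}\mid\mathbf{s}]\;\ge\;2p_{j^{*}}-p_{j^{*}}-\max_{i^{*}}\Pr[i=i^{*}\mid\mathbf{s}]\;=\;p_{j^{*}}-\max_{i^{*}}\Pr[i=i^{*}\mid\mathbf{s}],
\]
which is \emph{positive} whenever $p$ is more skewed than the posterior on $i$. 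So your Bob response fails to punish a non-uniform $j^{A}$, and the ``WLOG $p$ is near-uniform'' step does not go through. The paper avoids this by drawing $j^{B}$ \emph{uniformly}; then $\E[2\cdot\mathbf{1}\{j^{B}=t^{A}\}]=2/n$ regardless of Alice's $t^{A}$, and the gadget cleanly gives $\E[U_{\text{seek}}^{A}\mid\mathbf{s}]\le 2/n-\max_{j}p_{j}-\max_{i^{*}}\Pr[i=i^{*}\mid\mathbf{s}]\le 0$. The price is that your exact equality $\E[\tau^{B,Z}\mid\mathbf{s}]=\E[\tau^{A,Z}\mid\mathbf{s}]$ degrades to an $O(\delta^{3})$-approximate one, which is exactly what the paper uses.

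One further difference worth noting: your use of $q^{B}=\arg\max_{i^{*}}\Pr[i=i^{*}\mid\mathbf{s}]$ enforces near-uniformity of the posterior on $i$ and then deduces $\Pr[(i,j^{A})\in\psi\mid\mathbf{s}]\le d\cdot\max_{i^{*}}\Pr[i=i^{*}\mid\mathbf{s}]$. The paper instead has Bob sample $\hat{j}^{A}$ from Alice's marginal and set $q^{B}$ to a random $\psi$-neighbor of $\hat{j}^{A}$, which bounds $\Pr[(i,j^{A})\in\psi\mid\mathbf{s}]$ directly without first controlling the whole posterior on $i$. Both routes work here (each variable has exactly $d$ neighbors), but the paper's is more robust: it controls precisely the quantity that enters $\tau^{A,Z}$, whereas your argument requires the intermediate $L^{\infty}$ bound on the posterior, which in other settings can be strictly stronger than what the gadget actually enforces.
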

\begin{proof}
On any signal, Bob chooses $\left(v^{B},c^{B}\right)$ from the same
distribution that Alice uses for $\left(v^{A},c^{A}\right)$. He draws
$j^{B}$ uniformly at random, and picks $t^{B}$ and $q^{B}$ so as
to minimize $\E\left[U_{\text{seek}}^{A}\mid\mathbf{s}\right]$. Finally,
for each $j^{B}$, Bob draws $w^{B}$ from the same distribution that
Alice uses for $w^{A}$ conditioning on $j^{A}=j^{B}$ (and uniformly
at random if Alice never plays $j^{A}=j^{B}$). By symmetry, $\E\left[U_{b}^{A}\mid\mathbf{s}\right]=0$
and $\E\left[U_{\text{seek}}^{A}\mid\mathbf{s}\right]\leq0$. \ifFULL

Notice that 
\begin{equation}
\E\left[U_{\psi}^{A}\mid\mathbf{s}\right]\leq\delta^{3}\cdot\Pr\left[\left(i,j^{A}\right)\in\psi\mid\mathbf{s}\right]+\delta^{5}\cdot\frac{d}{n}.\label{eq:E=00005BU-psi=00005D<Pr=00005B(i,jA) in psi=00005D}
\end{equation}
We can assume wlog that $\Pr\left[\left(i,j^{A}\right)\in\psi\mid\mathbf{s}\right]\leq3d/n$.
Otherwise, Bob can draw $\hat{j}^{A}$ from the same marginal distribution
that Alice uses for $j^{A}$, and set $q^{B}$ to a random $\psi$-neighbor
of $\hat{j}^{A}$. This would imply $\Pr\left[i=q^{B}\mid\mathbf{s}\right]\geq\Pr\left[\left(i,j^{A}\right)\in\psi\mid\mathbf{s}\right]/d$,
and therefore
\begin{align*}
\E\left[U_{\text{seek}}^{A}\mid\mathbf{s}\right] & \leq2/n-\Pr\left[\left(i,j^{A}\right)\in\psi\mid\mathbf{s}\right]/d\\
 & \leq-\Pr\left[\left(i,j^{A}\right)\in\psi\mid\mathbf{s}\right]/3d\\
 & \leq-\E\left[U_{\psi}^{A}\mid\mathbf{s}\right],
\end{align*}
in which case $\E\left[U^{A}\mid\mathbf{s}\right]\leq0$. Therefore
by (\ref{eq:E=00005BU-psi=00005D<Pr=00005B(i,jA) in psi=00005D}),
$\E\left[U_{\psi}^{A}\mid\mathbf{s}\right]\leq4d\delta^{3}/n$.

Furthermore, we claim that, conditioned on signal $\mathbf{s}$, Alice's
optimal marginal distribution over the choice of $j^{A}$ is $O\left(\delta^{3}\right)$-close
to uniform (in total variation distance). Suppose by contradiction
that the distribution is $8d\delta^{3}$-far from uniform. Then there
exists some $\ell\in\left[n\right]$ such that $\Pr\left[j^{A}=\ell\mid\mathbf{s}\right]\geq\left(1+4d\delta^{3}\right)/n$.
If Bob always plays $t^{B}=\ell$, then $\E\left[U_{\text{seek}}^{A}\mid\mathbf{s}\right]\leq-4d\delta^{3}/n$,
and hence $\E\left[U^{A}\mid\mathbf{s}\right]\leq0$.

Since Alice's distribution over $j^{A}$ is $O\left(\delta^{3}\right)$-close
to uniform, we have that Bob's distribution over $\left(j^{B},w^{B}\right)$
is $O\left(\delta^{3}\right)$-close to Alice's distribution over
$\left(j^{A},w^{A}\right)$. Therefore, $\E\left[\tau^{B,Z}\mid\mathbf{s}\right]\geq\E\left[\tau^{A,Z}\mid\mathbf{s}\right]-O\left(\delta^{3}\right)$,
so 
\begin{eqnarray*}
\E\left[\delta^{3}\tau^{A,Z}-\delta^{4}\tau^{B,Z}\mid\mathbf{s}\right] & \leq & \left(\delta^{3}-\delta^{4}+O\left(\delta^{7}\right)\right)\cdot\underbrace{\Pr\left[\left(i,j^{A}\right)\in\psi\mid\mathbf{s}\right]}_{\frac{d}{n}\left(1+O\left(\delta^{3}\right)\right)}\\
 & \leq & \frac{d}{n}\left(\delta^{3}-\delta^{4}+O\left(\delta^{6}\right)\right)
\end{eqnarray*}

We now upper bound $\E\left[\tau^{A,B}\mid\mathbf{s}\right]$. By
the premise, any assignment to all variables violates at least an
$\eta$-fraction of the constraints. In particular, this is true in
expectation for assignments drawn according to Alice's and Bob's mixed
strategy. (Recall that we defined Bob's mixed strategy so that conditioned
on each variable, Alice and Bob have the same distribution over assignments).
Therefore, since Alice's marginal distribution over $j^{A}$ is $O\left(\delta^{3}\right)$-close
to uniform, we have that 
\begin{gather*}
\E\left[\tau^{A,B}\mid\mathbf{s}\right]\leq\left(1-\Omega\left(\eta\right)+O\left(\delta^{3}\right)\right)\cdot\Pr\left[\left(j^{A},j^{B}\right)\in\psi\mid\mathbf{s}\right]\leq\frac{d}{n}\left(1-\Omega\left(\eta\right)+O\left(\delta^{3}\right)\right).
\end{gather*}
Thus in total
\begin{gather*}
\E\left[U^{A}\mid\mathbf{s}\right]\leq\E\left[U_{\psi}^{A}\mid\mathbf{s}\right]\leq\frac{d}{n}\left(\delta^{3}-\delta^{4}+\left(1-\Omega\left(\eta\right)\right)\delta^{5}+O\left(\delta^{6}\right)\right).
\end{gather*}
 \else  See full version for details. \fi
\end{proof}

\section{\label{sec:Lying}Lying is even harder}
\begin{thm}
\label{thm:lying}Approximating {\sc Zero-Sum Lying} with Alice's
payoffs in $\left[0,1\right]$ to within an additive $\left(1-2^{-n}\right)$
is \NP-hard.
\end{thm}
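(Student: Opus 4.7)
The plan is to build an NP-hardness reduction that reuses the multiplicative hardness construction of Section 4, specifically exploiting the per-signal (rather than average) formulation of Lemmas 4.1 and 4.2 — which, as the paper flags, was designed precisely with this application in mind. The crucial leverage of the lying setting is that Sam can decouple the alleged posterior (computed by Alice and Bob from $\varphi_{\textsc{alleged}}$) from the real state distribution and from $\varphi_{\textsc{real}}$. Sam can therefore commit Alice and Bob to play a specific Nash strategy $(\mathbf{x}^{*},\mathbf{y}^{*})$ of his choice (the Nash of the posterior induced by a favored signal $\mathbf{s}^{*}$), while the actual state, and hence the actual payoff $\mathbf{x}^{*\top}M\mathbf{y}^{*}$, is drawn from the true prior.

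First I would take the Section 4 game, rescale payoffs into $[0,1]$, and attach an exponential amplification gadget. The aim is that Alice's payoff equals (up to $2^{-n}$) the indicator that Alice plays a globally consistent satisfying assignment: one way is to compose $n$ independent consistency checks of the Section~4 flavor into a multiplicative (product) payoff, so that each individual constraint violation scales the payoff by a factor bounded away from $1$, and the $\Omega(n)$ violations forced by PCP soundness drive the value down to $2^{-\Omega(n)}$. Nature's state carries the index of the random check (plus the auxiliary $(\hat{b},i,\vec{u})$ data from Section~4), keeping the game polynomial-sized while the product structure lives in the payoffs.

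Second, for completeness, given a satisfying assignment $\vec{\alpha}$ of $\psi$, Sam sets $\varphi_{\textsc{alleged}}$ to the honest scheme of Lemma \ref{lem:multiplicative-completeness}, so that for every signal in its support the alleged posterior's Nash forces Alice's strategy to play according to $\vec{\alpha}$. Sam then defines $\varphi_{\textsc{real}}$ to be the constant function outputting a particular signal $\mathbf{s}^{*}$ in the support. By the \emph{per-signal} guarantee of Lemma \ref{lem:multiplicative-completeness}, Alice's induced strategy already wins the Section~4 gadget on $\mathbf{s}^{*}$; because $\vec{\alpha}$ satisfies every constraint globally, the amplification gadget also pays near $1$ on every real state, yielding expected payoff $\geq 1-2^{-n}$. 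For soundness, when $\psi$ is $(1-\eta)$-satisfiable, I would apply Lemma \ref{lem:multiplicative-soundness} to each signal $\mathbf{s}$ in the support of $\varphi_{\textsc{real}}$: for any induced posterior and any near-Nash Alice strategy, Bob has a response keeping Alice's guaranteed gadget payoff below the threshold by a constant. Combined with the product amplification this drops Alice's expected real payoff to $2^{-\Omega(n)}$.

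The main obstacle I expect is step one: engineering the amplification so that a constant per-check gap of the Section~4 type compounds to a $(1-2^{-n})$ gap while keeping the game polynomial, the payoffs in $[0,1]$, and the zero-sum structure intact. A clean candidate is to let Alice's and Bob's strategies specify an entire assignment (represented succinctly through the partition/XOR gadgets already used in Section~4), and let nature's state select a uniform check; then the expected payoff over nature factors and the product-form lower bound follows from the per-signal soundness of Lemma \ref{lem:multiplicative-soundness} applied uniformly. A secondary subtlety is that Alice's Nash is with respect to the \emph{alleged} posterior, not the real game, so one must verify that the strategy forced by the alleged posterior remains near-optimal in the real game on every state; the global consistency of the satisfying assignment is exactly what makes this go through in the completeness direction and exactly what fails in the soundness direction.
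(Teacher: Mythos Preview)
Your proposal has a genuine gap in the amplification step. You claim that by letting nature select a uniform check, ``the expected payoff over nature factors'' into a product of per-check quantities. It does not: if nature picks check $i$ uniformly and the payoff is (a rescaling of) the indicator that check $i$ passes, then the expected payoff is the \emph{average} fraction of satisfied checks, not the product. A constant-fraction violation rate therefore yields only a constant additive gap, not $1-2^{-\Omega(n)}$. You give no alternative mechanism for a product-form payoff that keeps the strategy sets polynomial and payoffs in $[0,1]$; encoding a literal product of $n$ indicators in a single bilinear payoff would require exponentially many strategies. Your use of lying is also essentially trivial (a constant $\varphi_{\textsc{real}}$), so nothing in your scheme actually exploits the dishonest model beyond what honest signaling already provides.

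The paper's argument is entirely different and avoids amplifying the CSP gap altogether. It adds a \emph{degenerate game}, drawn with probability $1-2^{-n}$, in which Alice's payoff is $1$ if Bob plays any of his first $n$ columns and $0$ on a new $(n{+}1)$-st column. With the remaining $2^{-n}$ probability, the Section~\ref{sec:Multiplicative-hardness} game appears with roles reversed and an extra ``safe'' column paying Bob the threshold $(c_{1}+c_{2})/2n$. The per-signal Lemmas~\ref{lem:multiplicative-completeness} and~\ref{lem:multiplicative-soundness} are used only to decide Bob's preference: if $\psi$ is satisfiable, Sam can \emph{allege} a signal under which Bob's maxmin on the first $n$ columns exceeds $(c_{1}+c_{2})/2n$, so Bob plays among them; $\varphi_{\textsc{real}}$ then sends that same signal always, while in truth the game is degenerate w.p.\ $1-2^{-n}$, giving Alice payoff $\geq 1-2^{-n}$. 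If $\psi$ is $(1-\eta)$-unsatisfiable, then for \emph{every} alleged posterior Bob's maxmin on the first $n$ columns is at most $c_{2}/n<(c_{1}+c_{2})/2n$, so Bob always takes the safe column and Alice's payoff is negative (at most $0$ after rescaling). The $(1-2^{-n})$ gap thus comes from the lying-plus-degenerate-game mechanism, not from any amplification of the underlying CSP.
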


\subsubsection*{Construction}

Consider the construction from Section \ref{sec:Multiplicative-hardness}
for the honest signaling problem. Lemmata \ref{lem:multiplicative-completeness}
and \ref{lem:multiplicative-soundness} guarantee that there exists
a distribution ${\cal D}_{\textsc{honest}}$ of $n\times n$ zero-sum
games and constants $c_{1}>c_{2}$ such that it is \NP-hard to distinguish
between the following:
\begin{description}
\item [{Completeness}] If $\psi$ is satisfiable, there exists a signaling
scheme $\varphi_{\textsc{honest}}$, such that for any signal in $\varphi_{\textsc{honest}}$'s
support, Alice's maxmin payoff is at least $c_{1}/n$.
\item [{Soundness}] If $\psi$ is $\left(1-\eta\right)$-unsatisfiable,
for every signaling scheme $\varphi_{\textsc{honest}}^{'}$ and every
signal in the support, Alice's maxmin payoff is at most $c_{2}/n$.
\end{description}
For {\sc Zero-Sum Lying}, we construct a hard distribution of $n\times\left(n+1\right)$
zero-sum games as follows. With probability $2^{-n}$ Alice's payoffs
matrix is of the form: 
\begin{equation}
\left(\begin{array}{cc}
-A_{\textsc{honest}}^{\top} & \begin{array}{c}
-\left(c_{1}+c_{2}\right)/2n\\
\vdots\\
-\left(c_{1}+c_{2}\right)/2n
\end{array}\end{array}\right),\label{eq:lying-matrix}
\end{equation}
where Alice chooses a row (Bob chooses a column), and $A_{\textsc{honest}}$
is an $n\times n$ matrix drawn from ${\cal D}_{\textsc{honest}}$.
In other words, Bob has to choose between receiving payoff $\left(c_{1}+c_{2}\right)/2n$,
or playing a game drawn from ${\cal D}_{\textsc{honest}}$, but with
the roles reversed. 

Otherwise (with probability $1-2^{-n}$), Alice's payoff depends only
on Bob: it is $1$ if Bob chooses any of his first $n$ actions, and
$0$ otherwise; we call this the {\em degenerate game}.

Notice that we promised payoffs in $\left[0,1\right]$, whereas (\ref{eq:lying-matrix})
has payoffs in $\left[-1,0\right]$. $\left[0,1\right]$ payoffs can
be obtained, without compromising the inapproximability guarantee,
by scaling and shifting the entries in (\ref{eq:lying-matrix}) in
a straightforward manner.

\subsubsection*{Completeness}
\begin{lem}
If $\psi$ is satisfiable, there exists a dishonest signaling scheme,
such that Alice's expected payoff is at least $1-2^{-n}$.\end{lem}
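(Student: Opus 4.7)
The plan is to exploit the signaler's ability to decouple the alleged scheme (used by Alice and Bob to form posteriors) from the real scheme (determining the actually broadcast signal). I choose the pair so that every signal Alice and Bob ever see \emph{looks}, under $\varphi_{\textsc{alleged}}$, exactly like a signal from the honest construction of Section~\ref{sec:Multiplicative-hardness}, even though in reality the underlying matrix is almost always the trivial degenerate one. In the perceived posterior, Bob's maximin value strictly exceeds the value of the safe $(n+1)$-th column, so in every equilibrium he randomizes only over his first $n$ columns; this guarantees Alice payoff $1$ whenever the true matrix is degenerate.

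Concretely, I would define $\varphi_{\textsc{alleged}}(M)$ to apply the honest-completeness scheme $\varphi_{\textsc{honest}}$ from Lemma~\ref{lem:multiplicative-completeness} to the inner block $A_{\textsc{honest}}$ whenever $M$ has the form in~(\ref{eq:lying-matrix}), and to output a dedicated symbol $\star$ whenever $M$ is degenerate. Define $\varphi_{\textsc{real}}(M)$ to ignore $M$ altogether, sample an independent $A \sim \mathcal{D}_{\textsc{honest}}$, and output $\varphi_{\textsc{honest}}(A)$. Since $\star$ is never produced by $\varphi_{\textsc{real}}$, the required containment $\supp(\varphi_{\textsc{real}}) \subseteq \supp(\varphi_{\textsc{alleged}})$ holds automatically.

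Next I would analyze the posterior. For any signal $\sigma$ in the range of $\varphi_{\textsc{real}}$, the event $\{\varphi_{\textsc{alleged}}(M') = \sigma\}$ coincides with the event that $M'$ is a reversed honest game whose inner block satisfies $\varphi_{\textsc{honest}}(A_{\textsc{honest}}) = \sigma$. Hence the posterior matrix $\E[M' \mid \varphi_{\textsc{alleged}}(M') = \sigma]$ is precisely the reverse-transpose of the honest posterior, with the constant safe column appended. By the per-signal form of Lemma~\ref{lem:multiplicative-completeness}, Bob (playing Alice's role in the reversed game) can guarantee payoff at least $c_1/n$ using only the first $n$ columns, which strictly exceeds the safe column's value $(c_1+c_2)/(2n)$. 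In any zero-sum equilibrium every pure strategy in Bob's support attains the game value; since the game value is at least $c_1/n > (c_1+c_2)/(2n)$, the safe column cannot lie in his support.

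It remains to combine these observations. When Alice and Bob play any equilibrium of the perceived posterior, Bob's action concentrates on his first $n$ columns. If the actual $M$ is degenerate (probability $1 - 2^{-n}$), Alice's payoff is exactly $1$ regardless of her own mixed strategy; if $M$ is the reversed honest game (probability $2^{-n}$), Alice's payoff is at least $0$ after the straightforward shift into $[0,1]$. The total expected payoff is therefore at least $1 - 2^{-n}$, as claimed. The main delicate point is the per-signal strict-dominance argument above: this is exactly why Lemma~\ref{lem:multiplicative-completeness} was stated with the ``for every signal in the support'' strengthening rather than merely in expectation, and is the only place where that extra strength is used.
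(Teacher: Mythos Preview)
Your proof is correct and follows essentially the same approach as the paper: the alleged scheme applies $\varphi_{\textsc{honest}}$ to the non-degenerate block and emits a distinguished symbol on the degenerate game, the real scheme always outputs some signal from the honest support, and the per-signal guarantee of Lemma~\ref{lem:multiplicative-completeness} forces Bob off the safe $(n{+}1)$-th column so that the degenerate game yields payoff $1$. The only cosmetic differences are that the paper fixes one arbitrary honest signal for $\varphi_{\textsc{real}}$ rather than resampling from ${\cal D}_{\textsc{honest}}$, and is terser about the equilibrium-support argument that you spelled out explicitly.
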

\begin{proof}
We first construct $\varphi_{\textsc{alleged}}$ as follows. Whenever
nature samples a payoff matrix as in (\ref{eq:lying-matrix}), $\varphi_{\textsc{alleged}}$
outputs the signal that $\varphi_{\textsc{honest}}$ would output
for $A_{\textsc{honest}}$. Whenever Alice and Bob play the degenerate
game, $\varphi_{\textsc{alleged}}$ outputs a special symbol $\perp$. 

When Bob observes any symbol from the support of $\varphi_{\textsc{honest}}$,
he can guarantee a payoff of $c_{1}/n>\left(c_{1}+c_{2}\right)/2n$
by playing a mix of his first $n$ strategies. Therefore he only uses
his last strategy when observing the special symbol $\perp$. 

Our true signaling scheme $\varphi_{\textsc{real}}$ always outputs
an (arbitrary) signal from the support of $\varphi_{\textsc{honest}}$,
regardless of the state of nature. With probability $1-2^{-n}$, Alice
and Bob are actually playing the degenerate game, so Alice's payoff
is $1$.
\end{proof}

\subsubsection*{Soundness}
\begin{lem}
If $\psi$ is $\left(1-\eta\right)$-unsatisfiable, then for any dishonest
signaling scheme $\left(\varphi_{\textsc{alleged}}^{'},\varphi_{\textsc{real}}^{'}\right)$,
Alice's expected payoff is negative.\end{lem}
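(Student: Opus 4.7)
The plan is to fix an arbitrary dishonest scheme $(\varphi_{\textsc{alleged}}',\varphi_{\textsc{real}}')$ and show that, for every signal $\sigma$ in the real support, Bob's unique minimax strategy in the induced posterior zero-sum game is to play column $n{+}1$ deterministically. Once this is in hand, Alice's actual payoff given $\sigma$ collapses to a single number determined entirely by the real posterior (because column $n{+}1$ is row-constant), and summing yields a total of $-\Theta(2^{-n}/n)<0$.

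First I would spell out the posterior game for each signal $\sigma$. Let $\pi_\sigma^{\textsc{alleged}}$ be the alleged posterior probability of a hard state and $\bar{A}_\sigma^{\textsc{alleged}}$ the alleged expected honest matrix conditioned on being hard. The posterior Alice-payoff matrix is $\pi_\sigma^{\textsc{alleged}}$ times the hard block of~(\ref{eq:lying-matrix}) plus $(1-\pi_\sigma^{\textsc{alleged}})$ times the degenerate matrix: column $n{+}1$ is the row-constant $-\pi_\sigma^{\textsc{alleged}}(c_1+c_2)/(2n)$, while column $j\le n$ has entry $-\pi_\sigma^{\textsc{alleged}}(\bar{A}_\sigma^{\textsc{alleged}})_{j,i}+(1-\pi_\sigma^{\textsc{alleged}})$ in row $i$. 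Lemma~\ref{lem:multiplicative-soundness} applied to the alleged posterior distribution yields the honest-game value bound $V(\bar{A}_\sigma^{\textsc{alleged}})\le c_2/n$.

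The crux is to pin down Bob's unique optimal strategy. For a candidate $\mathbf{y}=\beta\mathbf{e}_{n+1}+(1-\beta)\tilde{\mathbf{y}}$ one computes $\max_{\mathbf{x}}\mathbf{x}^\top M\mathbf{y}=(1-\beta)\bigl[(1-\pi_\sigma^{\textsc{alleged}})-\pi_\sigma^{\textsc{alleged}}\min_i(\tilde{\mathbf{y}}^\top\bar{A}_\sigma^{\textsc{alleged}})_i\bigr]-\beta\pi_\sigma^{\textsc{alleged}}(c_1+c_2)/(2n)$. Bob's optimal $\tilde{\mathbf{y}}$ maximizes the inner minimum to $V(\bar{A}_\sigma^{\textsc{alleged}})$ by minimax, leaving a function linear in $\beta$ whose slope is $\pi_\sigma^{\textsc{alleged}}V(\bar{A}_\sigma^{\textsc{alleged}})-(1-\pi_\sigma^{\textsc{alleged}})-\pi_\sigma^{\textsc{alleged}}(c_1+c_2)/(2n)$. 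Because $V(\bar{A}_\sigma^{\textsc{alleged}})\le c_2/n<(c_1+c_2)/(2n)$---this is where the gap $c_1>c_2$ is essential---this slope is \emph{strictly} negative, so the unique minimizer is $\beta=1$, forcing $\mathbf{y}_\sigma=\mathbf{e}_{n+1}$ in every Nash equilibrium of the posterior.

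To finish, since column $n{+}1$ of the real posterior matrix equals $-\pi_\sigma^{\textsc{real}}(c_1+c_2)/(2n)$ in every row, Alice's actual payoff conditional on $\sigma$ equals that value, independent of $\mathbf{x}_\sigma$. Weighting by $r_\sigma=\Pr[\varphi_{\textsc{real}}'(M)=\sigma]$ and summing gives Alice's total expected payoff $-\frac{c_1+c_2}{2n}\sum_\sigma r_\sigma\pi_\sigma^{\textsc{real}}=-\frac{c_1+c_2}{2n}\cdot 2^{-n}<0$, as required (the affine rescaling into $[0,1]$ preserves the completeness/soundness gap). The main obstacle is precisely the \emph{uniqueness} of Bob's minimax strategy: without it Sam could pick a Nash equilibrium in which Bob mixes mass into columns $1..n$, which would let Alice harvest payoff close to $1$ on the overwhelming $(1-2^{-n})$-mass degenerate event and wreck soundness. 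The strict separation $c_2/n<(c_1+c_2)/(2n)<c_1/n$ baked into the safe-column value in~(\ref{eq:lying-matrix}) is exactly what makes the slope strictly negative.
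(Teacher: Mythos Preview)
Your argument is correct and follows the same route as the paper: decompose each alleged-posterior game as a convex combination of the degenerate game and a signal of an honest scheme on $\mathcal{D}_{\textsc{honest}}$, invoke Lemma~\ref{lem:multiplicative-soundness} to bound the value of the honest part by $c_2/n$, conclude that column $n{+}1$ is Bob's minimax, and then read off Alice's true payoff as $-2^{-n}(c_1+c_2)/(2n)$. The paper compresses all of this into two sentences (``in the degenerate game, Bob always prefers his last strategy; in the honest part he prefers $(c_1+c_2)/2n$ over a maxmin of at most $c_2/n$''), whereas you write out the bilinear form and the slope computation explicitly.

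The one place where you add real content is the uniqueness of Bob's minimax. The paper's proof asserts ``Bob always plays his last strategy'' without justifying that no \emph{other} Nash equilibrium exists; as you correctly flag, this matters because the model lets the players select any Nash equilibrium, and a non-unique minimax could let Bob leak mass onto columns $1,\dots,n$ and hand Alice the degenerate-game payoff. Your linear-in-$\beta$ argument cleanly closes this: since $h(\beta)=\min_{\tilde{\mathbf{y}}}\max_{\mathbf{x}}\mathbf{x}^{\top}M\mathbf{y}$ has strictly negative slope (because $V(\bar{A}_\sigma^{\textsc{alleged}})\le c_2/n<(c_1+c_2)/(2n)$ and $1-\pi_\sigma\ge 0$), any $\mathbf{y}$ with $\beta<1$ satisfies $\max_{\mathbf{x}}\mathbf{x}^{\top}M\mathbf{y}\ge h(\beta)>h(1)$, so every minimax $\mathbf{y}$ has $\beta=1$. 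This is the right level of care, and it is implicit but not spelled out in the paper.
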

\begin{proof}
Any signal in the support of $\varphi_{\textsc{alleged}}^{'}$ corresponds
to a mixture of the degenerate game, and the distribution induced
by some signal $\mathbf{s}^{'}$ in the support of some honest signaling
scheme $\varphi_{\textsc{honest}}^{'}$ for ${\cal D}_{\textsc{honest}}$.
In the degenerate game, Bob always prefers to play his last strategy.
For any $\mathbf{s}^{'}$, Bob again prefers a payoff of $\left(c_{1}+c_{2}\right)/2n$
for playing his last strategy over a maxmin of at most $c_{2}/n$
when playing any mixture of his first $n$ strategies. Therefore,
Bob always plays his last strategy, regardless of the signal he receives,
which guarantees him a payoff of $\left(c_{1}+c_{2}\right)/2^{n+1}n>0$. 
\end{proof}
\bibliographystyle{plain}
\bibliography{signaling}

\end{document}